\newcommand{\true}{\ensuremath{\mathsf{true}}\xspace}
\newcommand{\defn}{\ensuremath{\stackrel{\textup{\tiny def}}{=}}}
\newcommand{\traces}{\ensuremath{\mathcal{T}}\xspace}
\newcommand{\APtraces}{\ensuremath{\mathsf{APTr}}\xspace}
\newcommand{\AP}{\ensuremath{\mathsf{AP}}\xspace}
\newcommand{\STS}{\ensuremath{\mathcal{S}}\xspace}
\newcommand{\Vars}{\ensuremath{\mathcal{V}}\xspace}
\newcommand{\InVars}{\ensuremath{\mathcal{I}}\xspace}
\newcommand{\OutVars}{\ensuremath{\mathcal{O}}\xspace}
\newcommand{\StateVars}{\ensuremath{\mathcal{X}}\xspace}
\newcommand{\restrict}[1]{\ensuremath{\vert}_{{#1}}}
\newcommand{\mutvar}{\ensuremath{\mathrm{mut}}}
\newcommand{\trans}[1]{\ensuremath{\xrightarrow{#1}}}
\author{Andreas Fellner $^{1,2}$ \and Mitra Tabaei Befrouei $^{2}$ \and Georg Weissenbacher$^{2}$ }
\institute{$^{1}$ AIT Austrian Institute of Technology, $^{2}$ TU Wien}
\title{Mutation Testing with Hyperproperties\thanks{
The research was supported by ECSEL JU
under the project H2020 737469 AutoDrive - Advancing failaware,
fail-safe, and fail-operational electronic components, systems, and
architectures for fully automated driving to make future mobility
safer, affordable, and end-user acceptable.
}}
\begin{document}

\maketitle 

\begin{abstract}
We present a new method for model-based mutation-driven test case generation.
Mutants are generated by making small syntactical modifications to the model or
source code of the system under test. A test case kills a mutant if
the behavior of the mutant deviates from the original system when
running the test.
In this work, we use hyperproperties---which allow to express
relations between multiple executions---to formalize different notions of 
\emph{killing} for both deterministic as well as non-deterministic
models. The resulting hyperproperties are universal in the
sense that they apply to arbitrary reactive models and mutants. 
Moreover, an off-the-shelf model checking tool for hyperproperties can
be used to generate 
test cases. We evaluate our approach on a number of models
expressed in two different modeling languages by generating 
tests using a state-of-the-art mutation testing tool.
\end{abstract}


\section{Introduction}
\label{sec:intro}

Mutations---small syntactic modifications of programs that mimic
typical programming errors---are used to assess the quality of
existing test suites. A test \emph{kills} a mutated program
(or \emph{mutant}), obtained by applying a \emph{mutation operator}
to a program, if its outcome for the mutant deviates
from the outcome for the unmodified program. The percentage
of mutants killed by a given test suite serves as a metric for
test quality. The approach is based on two assumptions: (a)
the \emph{competent programmer hypothesis} \cite{Budd1979},
which states that implementations are typically close-to-correct,
and (b) the \emph{coupling effect} \cite{Offutt1992},
which states that a test suites ability to detect simple errors
(and mutations) is indicative of its ability to detect complex errors. 

In the context of model-based testing, mutations are also used
to design tests. Model-based test case generation is the
process of deriving tests from a reference model (which is assumed
to be free of faults) in such a way that they reveal 
any non-conformance of the reference model and its mutants, i.e., kill the mutants.
The tests detect potential errors (modeled by mutation operators) of implementations, 
treated as a black box in this setting, 
that conform to a mutant instead of the reference model.
A test \emph{strongly}
kills a mutant if it triggers an observable difference in
behavior \cite{Budd1979}, and \emph{weakly} kills a mutant if the deviation
is merely in a difference in traversed program states \cite{Howden1982}.

The aim of our work is to automatically construct tests that strongly kill
mutants derived from a reference model. To this end,
we present two main contributions:
\begin{itemize}
\item[(1)] A formalization of mutation killing in terms of
  \emph{hyperproperties} \cite{DBLP:journals/jcs/ClarksonS10},
  a formalism to relate multiple execution traces of a program
  which has recently gained popularity due to its ability
  to express security properties such as non-interference and
  observational determinism. Notably, our formalization also
  takes into account potential non-determinism,
  which significantly
  complicates killing of mutants due to the unpredictability of the
  test outcome.
\item[(2)] An approach that enables the automated construction
  of tests by means of \emph{model checking} the proposed hyperproperties
  on a model that aggregates the reference model
  and a mutant of it. To circumvent limitations of currently available
  model checking tools for hyperproperties, we present a transformation
  that enables the control of non-determinism via additional program
  inputs. We evaluate our approach using a state-of-the-art
  model checker on a number of models expressed in two different
  modeling languages.
\end{itemize}

\paragraph{Running example.} We illustrate the main concepts of our work
in Figure~\ref{fig:beverage}. Figure~\ref{fig:beverage_code} shows the
SMV \cite{mcmillan92smv} model of a beverage machine, which
non-deterministically serves
\texttt{coff} (coffee) or \texttt{tea} after input \texttt{req} (request), 
assuming that there is still enough \texttt{wtr} (water) in the tank.
Water can be refilled with input \texttt{fill}.
The symbol $\varepsilon$ represents absence of input and output, respectively.

The code in Figure~\ref{fig:beverage_code} includes the 
variable {\tt mut} (initialized non-deter\-ministically in line 1),
which enables the activation of a mutation in line 7.
The mutant refills $1$ unit of water only, whereas the original
model fills $2$ units.

Figure~\ref{fig:beverage_hyperproperty} states a hyperproperty over
the inputs and outputs of the model formalizing that
the mutant can be killed \emph{definitely} (i.e.,
independently of non-deterministic choices). The execution
shown in Figure~\ref{fig:beverage_test} is a witness for this claim:
the test requests two drinks after filling the tank. 
For the mutant, the second request will necessarily fail, 
as indicated in Figure~\ref{fig:beverage_kill}, which shows
all possible output sequences for the given test.

\noindent
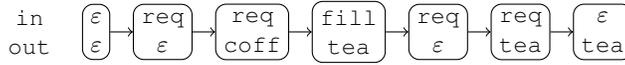
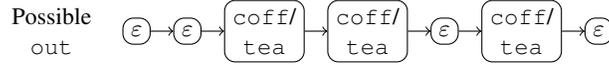
\begin{figure}[tb]
 \centering
	\begin{minipage}[b]{0.42\textwidth}
		\begin{subfigure}[b]{\textwidth}
			\begin{lstlisting}[basicstyle=\scriptsize]
init(in/out/wtr/mut):=$\varepsilon/\varepsilon$/2/$\{{\scriptstyle \top,\bot}\}$
next(in) :={$\varepsilon$,req,fill}
next(out):= 
 if(in=req&wtr>0):{coff,tea}
 else            :$\varepsilon$
next(wtr):= 
 if  (in=fill):(mut ? 1 : 2)
 elif(in=req&wtr>0):wtr-1
 else              :wtr
next(mut):=mut
			\end{lstlisting}
			\caption{Beverage machine with cond. mutant}
			\label{fig:beverage_code}
		\end{subfigure}
	\end{minipage}\hfill%
	\begin{minipage}[b]{0.58\textwidth}
		\begin{subfigure}[b]{\textwidth}
		  \scriptsize
				\begin{flalign*}
					&\exists\pi \forall \pi' \forall \pi'' \\
						 &\Box \big(\neg \texttt{mut}_{\pi} \wedge \texttt{mut}_{\pi'} \wedge \neg \texttt{mut}_{\pi''}\\
						 &\phantom{\Box \big(}[\texttt{in=}\varepsilon]_{\pi} \leftrightarrow [\texttt{in=}\varepsilon]_{\pi'} \leftrightarrow [\texttt{in=}\varepsilon]_{\pi''} \wedge \\
						 &\phantom{\Box \big(}[\texttt{in=req}]_{\pi} \leftrightarrow [\texttt{in=req}]_{\pi'} \leftrightarrow [\texttt{in=req}]_{\pi''} \wedge \\
						 &\phantom{\Box \big(}[\texttt{in=fill}]_{\pi} \leftrightarrow [\texttt{in=fill}]_{\pi'} \leftrightarrow [\texttt{in=fill}]_{\pi''}\big) \rightarrow \\
						 &\lozenge \big(\neg([\texttt{o=}\varepsilon]_{\pi'} \leftrightarrow [\texttt{o=}\varepsilon]_{\pi''}) \vee \\
						 &\phantom{\lozenge \big(} \neg([\texttt{o=coff}]_{\pi'} \leftrightarrow [\texttt{o=coff}]_{\pi''}) \vee \\
						 &\phantom{\lozenge \big(} \neg([\texttt{o=tea}]_{\pi'} \leftrightarrow [\texttt{o=tea}]_{\pi''})\big)
						\end{flalign*}
			\caption{Hyperproperty expressing killing}
			\label{fig:beverage_hyperproperty}
		\end{subfigure}
	\end{minipage}\hfill \\[.3cm]
	\begin{subfigure}[b]{\textwidth}
		\centering
		\begin{tikzpicture}[node distance = .3cm, every text node part/.style={align=center}]
			\node(expl) {\texttt{in}\\\texttt{out}};
			\node[rounded corners,rectangle,right=of expl,draw] (init) {$\varepsilon$\\$\varepsilon$};
			\node[rounded corners,rectangle,right=of init,draw] (t1) {\texttt{req}\\$\varepsilon$};
			\node[rounded corners,rectangle,right=of t1,draw] (t2) {\texttt{req}\\\texttt{coff}};
			\node[rounded corners,rectangle,right=of t2,draw] (t3) {\texttt{fill}\\\texttt{tea}};
			\node[rounded corners,rectangle,right=of t3,draw] (t4) {\texttt{req}\\$\varepsilon$};
			\node[rounded corners,rectangle,right=of t4,draw] (t5) {\texttt{req}\\\texttt{tea}};
			\node[rounded corners,rectangle,right=of t5,draw] (t6) {$\varepsilon$\\\texttt{tea}};
			\draw[->] (init) -- (t1);
			\draw[->] (t1) -- (t2);
			\draw[->] (t2) -- (t3);
			\draw[->] (t3) -- (t4);
			\draw[->] (t4) -- (t5);
			\draw[->] (t5) -- (t6);
		\end{tikzpicture}%
		\vspace{.5em}%
		\caption{Definitely killing test}
		\label{fig:beverage_test}
	\end{subfigure} \\[.3cm]
	\begin{subfigure}[b]{\textwidth}
		\centering
		\begin{tikzpicture}[node distance = .3cm, every text node part/.style={align=center}]
			\node (oexpl) {Possible\\\texttt{out}};
			\node[rounded corners,rectangle,right=of oexpl,draw] (oinit) {$\varepsilon$};
			\node[rounded corners,rectangle,right=of oinit,draw] (ot1) {$\varepsilon$};
			\node[rounded corners,rectangle,right=of ot1,draw] (ot2) {\texttt{coff}/ \\\texttt{tea}};
			\node[rounded corners,rectangle,right=of ot2,draw] (ot3) {\texttt{coff}/ \\\texttt{tea}};
			\node[rounded corners,rectangle,right=of ot3,draw] (ot4) {$\varepsilon$};
			\node[rounded corners,rectangle,right=of ot4,draw] (ot5) {\texttt{coff}/ \\\texttt{tea}};
			\node[rounded corners,rectangle,right=of ot5,draw] (ot6) {$\varepsilon$};
			\draw[->] (oinit) -- (ot1);
			\draw[->] (ot1) -- (ot2);
			\draw[->] (ot2) -- (ot3);
			\draw[->] (ot3) -- (ot4);
			\draw[->] (ot4) -- (ot5);
			\draw[->] (ot5) -- (ot6);			
		\end{tikzpicture}%
		\vspace{.5em}%
		\caption{Spurious test response of mutant}
		\label{fig:beverage_kill}
	\end{subfigure}
	\caption{Beverage machine running example}
	\label{fig:beverage}
\end{figure}

\paragraph{Outline.} Section \ref{sec:Hyper-logic} introduces
our system model and HyperLTL. Section \ref{sec:formalization-kr}
explains the notions of \emph{potential} and \emph{definite} killing of mutants,
which are then formalized in terms of hyperproperties
for deterministic and non-deterministic models in Section
\ref{sec:hyperkilling}. Section \ref{sec:tcg} introduces a
transformation to control non-determinism in models, and
Section \ref{sec:experiments} describes our experimental results.
Related work is discussed in Section \ref{sec:rel-work}.


\section{Preliminaries}
\label{sec:Hyper-logic}

This section introduces symbolic transition systems as our formalisms
for representing discrete reactive systems and provides the syntax and
semantics of HyperLTL, a logic for hyperproperties.

\subsection{System Model}
\label{sec:sysmod}


A symbolic transition system (STS) is a tuple
$\STS = \langle \InVars,\OutVars,\StateVars,\alpha,\delta \rangle$, 
where $\InVars,\OutVars,\StateVars$ are finite sets of input, output, and state variables,
$\alpha$ is a formula over $\StateVars \cup \OutVars$ (the initial conditions predicate),
and $\delta$ is a formula over $\InVars \cup \OutVars \cup \StateVars \cup \StateVars'$ (the transition relation predicate), 
where $\StateVars' = \{x' \mid x \in \StateVars\}$ is a set of primed variables
representing the successor states.
An input $I$, output $O$, state $X$, and successor state $X'$, respectively,
is a mapping of $\InVars,\OutVars$, $\StateVars$, and $\StateVars'$, respectively,
to values in a fixed domain that includes the elements $\top$ and $\bot$ (representing true and false, respectively). $Y\restrict{\Vars}$ denotes the restriction of the domain of mapping $Y$ to the variables $\Vars$.
Given a valuation $Y$ and a Boolean variable
$v\in\Vars$, $Y(v)$ denotes the value of $v$ in $Y$ (if defined)
and  $Y[v]$ and $Y[\neg v]$ denote $Y$ with $v$ set to $\top$
and $\bot$, respectively.

We assume that the initial conditions- and transition relation predicate are defined in a logic that includes standard Boolean operators $\neg$,
$\wedge$, $\vee$,
$\rightarrow$, and $\leftrightarrow$. We omit further details, as
as our results do not depend on a specific formalism.
We write $X,O \models \alpha$ and $I,O,X,X' \models \delta$ to denote that
$\alpha$ and $\delta$ evaluate to true under an evaluation
of inputs $I$, outputs $O$, states $X$, and successor states $X'$.
We assume that every STS has a distinct output $O_{\varepsilon}$, representing
absence of output.

A state $X$ with output $O$ such that $X,O \models \alpha$ are an \emph{initial state} and \emph{initial output}.
A state $X$ has a transition with input $I$ to its \emph{successor state}
$X'$ with output $O$ iff $I,O,X,X' \models \delta$, denoted by
$X \trans{I,O} X'$.
A \emph{trace} of $\STS$ is a sequence of tuples of concrete inputs,
outputs, and states
$\left\langle (I_0,O_0,X_0), (I_1,O_1,X_1), (I_2,O_2,X_2), \ldots \right\rangle$
such that $X_0,O_0\models\alpha$ and
$\forall j \geq 0\,.\, X_j \trans{I_j,O_{j+1}} X_{j+1}$.
We require that every state has at least one successor, therefore all traces of $\STS$ are infinite. 
We denote by $\traces(\STS)$ the set of all traces of $\STS$. 
Given a trace $p = \left\langle (I_0,O_0,X_0), (I_1,O_1,X_1), \ldots \right\rangle$, 
we write $p[j]$ for $(I_j,O_j,X_j)$,
$p[j,l]$ for $\langle (I_j,O_j,X_j),\ldots,(I_l,O_l,X_l) \rangle$,
$p[j,\infty]$ for $\langle (I_j,O_j,X_j),\ldots \rangle$
and $p\restrict{\Vars}$ to denote
$\left\langle ({I_0}\restrict{\Vars},{O_0}\restrict{\Vars},{X_0}\restrict{\Vars}), ({I_1}\restrict{\Vars},{O_1}\restrict{\Vars},{X_1}\restrict{\Vars}), \ldots \right\rangle$.
We lift restriction to sets of traces $T$ by defining $T\restrict{\Vars}$
as $\{p\restrict{\Vars}  \mid t \in T\}$.

\STS is \emph{deterministic} iff there is a unique pair of an initial state
and initial output and for each state $X$ and input $I$, 
there is at most one state $X'$ with output $O$,
such that $X\trans{I,O} X'$. Otherwise, the model is \emph{non-deterministic}.

In the following, we presume the existence of sets of atomic propositions
$\AP=\{\AP_{\InVars}\cup\AP_{\OutVars}\cup\AP_{\StateVars}\}$
(intentionally kept abstract)\footnote{Finite domains can be characterized
  using binary encodings; infinite domains require an extension of
  our formalism in Section \ref{subsec:hyperltl} with equality and is
  omitted for the sake of simplicity.}
serving as
labels that characterize inputs, outputs, and states (or properties thereof).

For a trace $p = \left\langle (I_0,O_0,X_0), (I_1,O_1,X_1),
\ldots \right\rangle$ the corresponding trace over $\AP$ is
$\AP(p) =
\left\langle \AP(I_0) \cup \AP(O_0) \cup \AP(X_0), \AP(I_1) \cup \AP(O_1)
\cup \AP(X_1), \ldots \right\rangle$. We lift this definition to sets of traces
by defining $\APtraces(\STS) \defn \{\AP(p) \mid p \in \traces(\STS)\}$.

\begin{example}
\label{example:sts}
Figure~\ref{fig:beverage_code} shows the 
formalization of a beverage machine in SMV~\cite{mcmillan92smv}.
In Figure~\ref{fig:beverage_hyperproperty}, we use atomic propositions
to enumerate the possible values of \texttt{in} and \texttt{out}.
This SMV model closely corresponds to an STS:
the initial condition predicate $\alpha$ and transition relation
$\delta$ are formalized using integer arithmetic as follows:

{\small
\begin{equation*}
\begin{split}
\alpha \defn &\texttt{out=}\varepsilon \wedge \texttt{wtr=2} \\
\delta \defn &\texttt{wtr>0} \wedge \texttt{in=req} \wedge \texttt{out=coff} \wedge \texttt{wtr'=wtr-1} \vee \\
&\texttt{wtr>0} \wedge \texttt{in=req} \wedge \texttt{out=tea} \wedge \texttt{wtr'=wtr-1} \vee  \\
&\texttt{in=fill} \wedge \neg \texttt{mut} \wedge \texttt{out=}\varepsilon \wedge \texttt{wtr'=2}  \vee \\
&\texttt{in=fill} \wedge \texttt{mut} \wedge \texttt{out=}\varepsilon \wedge \texttt{wtr'=1} \vee \\
&\texttt{in=}\varepsilon \wedge \texttt{out=}\varepsilon \wedge \texttt{wtr'=wtr}
\end{split}
\end{equation*}}

The trace $p = \langle (\varepsilon,\varepsilon,2),(\texttt{req},\varepsilon,\texttt{2}),(\texttt{req},\texttt{coff},1)$, $(\varepsilon,\texttt{tea},0),$ $\ldots \rangle$ is one possible execution of the system (for brevity, variable names are omitted).
Examples of atomic propositions for the system are $[\texttt{in=coff}],[\texttt{out=}\varepsilon],[\texttt{wtr>0}],[\texttt{wtr=0}]$ and the respective atomic proposition trace of $p$ is 
$\AP(p) = \langle \{[\texttt{in=}\varepsilon],[\texttt{out=}\varepsilon],[\texttt{wtr>0}]\},$\\$
\{[\texttt{in=req}],[\texttt{out=}\varepsilon],[\texttt{wtr>0}]\},\{[\texttt{in=req}],[\texttt{out=coff}],[\texttt{wtr>0}]\},
\{[\texttt{in=req}],$\\$[\texttt{out=tea}],[\texttt{wtr=0}]\} \ldots \rangle$
\end{example}

\subsection{HyperLTL}
\label{subsec:hyperltl}

In the following, we provide an overview of the HyperLTL, a logic for hyperproperties, 
sufficient for understanding the formalization in Section \ref{sec:hyperkilling}.
For details, we refer the reader to \cite{Clarkson2014}.
HyperLTL is defined over atomic proposition traces (see
Section \ref{sec:sysmod}) of a fixed STS
$\STS = \langle \InVars,\OutVars,\StateVars,\alpha,\delta \rangle$ as defined
in Section \ref{sec:sysmod}.

\paragraph{Syntax.} 
Let $\AP$ be a set of atomic propositions and
let $\pi$ be a \emph{trace variable} from 
a set $\mathcal{V}$ of trace variables.
Formulas of HyperLTL are defined by the following grammar:

\[
\begin{array}{lcccccccccccc}
\psi & ::= &\exists \pi. \psi &|& \forall\pi. \psi &|& \varphi &&&&&\\
\varphi & ::= &a_{\pi} &|& \neg \varphi &|& \varphi \vee \varphi &|&
\Circle \varphi &|& \varphi \, \mathcal{U} \varphi
\end{array}\]

Connectives $\exists$ and $\forall$ are universal and existential trace quantifiers, read as "along some traces" and "along all traces".
In our setting, atomic propositions $a \in \AP$ express facts about
states or the presence of inputs and outputs. 
Each atomic proposition is sub-scripted with a trace variable to indicate
the trace it is associated with.
The Boolean connectives $\wedge$, $\rightarrow$, and $\leftrightarrow$
are defined in terms of $\neg$ and $\vee$ as usual.
Furthermore, we use the standard temporal operators
\emph{eventually} $\diamondsuit \varphi \defn \true \,\, \mathcal{U} \varphi$, and 
\emph{always} $\Box \varphi \defn \neg \diamondsuit \neg \varphi$.

\paragraph{Semantics}

$\Pi \models_{\STS} \psi$ states that $\psi$ is valid for
a given mapping $\Pi: \mathcal{V} \rightarrow \APtraces(\STS)$
of trace variables to atomic proposition traces.
Let $\Pi\left[\pi \mapsto p \right]$ be as $\Pi$
except that $\pi$ is mapped to $p$.
We use $\Pi\left[i, \infty \right]$ to denote the
trace assignment $\Pi'(\pi)=\Pi(\pi)\left[i, \infty \right]$ for all $\pi$.
The validity of a formula is defined as follows:
\[
\begin{array}{lcccl}
\Pi \models_{\STS}  a_{\pi} &&  \mbox{iff} && a \in \Pi(\pi)[0]  \\
\Pi \models_{\STS} \exists \pi.\psi  && \mbox{iff}  &&  \mbox{there exists } p \in \APtraces(\STS): \Pi\left[\pi \mapsto p \right] \models _{\STS} \psi   \\
\Pi \models_{\STS} \forall \pi.\psi  && \mbox{iff} && \mbox{for all }  p \in \APtraces(\STS): \Pi\left[\pi \mapsto p \right] \models_{\STS} \psi   \\
\Pi \models_{\STS}  \neg \varphi &&  \mbox{iff} &&  \Pi \not\models_{\STS} \varphi   \\
\Pi \models_{\STS}  \psi_1 \vee \psi_2 &&  \mbox{iff}  && \Pi \models_{\STS} \psi_1  \mbox{ or } \Pi \models_{\STS} \psi_2   \\
\Pi \models_{\STS}  \Circle \varphi &&  \mbox{iff} && \Pi\left[1, \infty \right] \models_{\STS} \varphi   \\
\Pi \models_{\STS}  \varphi_1 \, \mathcal{U} \varphi_2 &&  \mbox{iff}  && \mbox{there exists }  i \geq 0:   \Pi\left[i, \infty \right] \models_{\STS} \varphi_2  \\
		&& &&  \mbox{and for all } 0 \leq j < i \mbox{  we have  } \Pi\left[j, \infty \right] \models_{\STS} \varphi_1  \\
\end{array}
\]

We write $\models_{\STS} \psi$ if
$\Pi \models_{\STS} \psi$ holds and $\Pi$ is empty.
We call $q \in \traces(\STS)$ a $\pi$-witness of a formula
$\exists \pi.\psi$, if $\Pi\left[\pi \mapsto p \right] \models_{\STS} \psi$
and $\AP(q) = p$.


\section{Killing mutants}
\label{sec:formalization-kr}

In this section, we introduce mutants, tests, and the notions of potential and definite killing.
We discuss how to represent an STS and its corresponding mutant 
as a single STS, which can then be model checked to determine killability.

\subsection{Mutants}

Mutants are variations of a model $\STS$ obtained by applying small
modifications to the syntactic representation of $\STS$.
A mutant of an STS $\STS = \langle \InVars,\OutVars,\StateVars,\alpha,\delta \rangle$ (the \emph{original model})
is an
STS $\STS^m = \langle \InVars,\OutVars,\StateVars,\alpha^m,\delta^m \rangle$ 
with equal sets of input, output, and state variables as $\STS$ but a
deviating initial predicate and/or transition relation.
We assume that $\STS^m$ is equally input-enabled as $\STS$, 
that is $\traces(\STS^m)\restrict{\InVars} = \traces(\STS)\restrict{\InVars}$, 
i.e., the mutant and model accept the same sequences of inputs.
In practice, this can easily be achieved by using self-loops with
empty output to ignore unspecified inputs.
We use standard mutation operators, such as disabling transitions, replacing operators, etc.
Due to space limitations and the fact that mutation operators are not the primary focus of this work, 
we do not list them here, but refer to Appendix~\ref{appendix:proofs} and \cite{Arcaini2015}.

We combine an original model represented by $\STS$ and a mutant $\STS^m$ into a \emph{conditional mutant} $\STS^{c(m)}$, 
in order to perform mutation analysis via model checking the combined model.
The conditional mutant is defined as 
$\STS^{c(m)}~\defn~\langle\InVars,\OutVars,\StateVars~\cup~\{\mutvar\},\alpha^{c(m)},\delta^{c(m)}\rangle$,
where $\mutvar$ is a fresh Boolean variable used to distinguish
states of the original and the mutated STS.

Suppose $\STS^m$ replaces a sub-formula $\delta_0$ of $\delta$ by $\delta_0^m$,
then the transition relation predicate of the conditional mutant $\delta^{c(m)}$ is obtained by replacing 
$\delta_0$ in $\delta$ by $(\mutvar \wedge \delta_0^m) \vee
(\neg\mutvar \wedge \delta_0)$.
We fix the value of $\mutvar$ in transitions
by conjoining $\delta$ with $\mutvar \leftrightarrow \mutvar'$.
The initial conditions predicate of the conditional mutant is defined similarly.
Consequently, for a trace $p \in \traces(\STS^{c(m)})$ it holds that 
if $p|_{\{\mutvar\}}=\{\bot\}^{\omega}$ then
$p\restrict{\InVars\cup\OutVars\cup\StateVars} \in \traces(\STS)$,
and if $p|_{\{\mutvar\}}=\{\top\}^{\omega}$ then
$p\restrict{\InVars\cup\OutVars\cup\StateVars} \in \traces(\STS^m)$.
Formally, $\STS^{c(m)}$ is non-deterministic, since $\mutvar$ is
chosen non-deterministically in the initial state. However,
we only refer to $\STS^{c(m)}$ as ``non-deterministic'' if
either $\STS$ or $\STS^m$ is non-deterministic, as $\mutvar$ is
typically fixed in the hypertproperties in Section \ref{sec:hyperkilling}.

%
%
Example \ref{example:sts} and Figure~\ref{fig:beverage_code} show
a conditional mutant as an STS and in SMV.

\subsection{Killing}
\label{sec:killing}

Killing a mutant amounts to finding inputs for which the mutant
produces outputs that deviate from the original model.
In a reactive, model-based setting, killing has been formalized
using conformance relations \cite{Tretmans1996},
for example in \cite{aichernig_refinement_2014,Fellner2017}, where
an implementation \emph{conforms} to its specification if all its
input/output sequences are part of/allowed by the specification.
In model-based testing, the model takes the role of the specification and
is assumed to be correct by design.
The implementation is treated as black box, and therefore
mutants of the specification serve as its proxy.
Tests (i.e., input/output sequences) that demonstrate
non-conformance between the model and its mutant
can be used to check whether the implementation adheres
to the specification or contains the bug reflected in the mutant.
The execution of a test on a system under test fails if the sequence of
inputs of the test triggers a sequence of outputs that 
deviates from those predicted by the test.
Formally, tests are defined as follows:

\begin{definition}[Test]
	A \emph{test} $t$ of \emph{length $n$} for $\STS$ comprises
        inputs $t\restrict{\InVars}$ and outputs $t\restrict{\OutVars}$
        of length $n$, such that there exists a trace
        $p \in \traces(\STS)$ with
        $p\restrict{\InVars}[0,n]~=~t\restrict{\InVars}$
        and $p\restrict{\OutVars}[0,n]~=~t\restrict{\OutVars}$.
\end{definition}

For non-deterministic models, in which a single sequence of
inputs can trigger different sequences of outputs,
we consider two different notions of killing. We say that a
mutant can be \emph{potentially killed} if there exist inputs
for which the mutant's outputs deviate from the original model
given an appropriate choice of non-deterministic initial states
and transitions. In practice, executing a test that potentially
kills a mutant on a faulty implementation that exhibits non-determinism
(e.g., a multi-threaded program)
may fail to demonstrate non-conformance (unless the non-determinism
can be controlled).
A mutant can be \emph{definitely killed} if
there exists a sequence of inputs for which the behaviors
of the mutant and the original model deviate independently
of how non-determinism is resolved.
Note potential and definite killability are orthogonal to the
folklore notions of weak and strong killing, which capture
different degrees of observability.
Formally, we define potential and definite killability as follows:

\begin{definition}[Potentially killable]
\label{def:potentiallykillable}
$\STS^m$ is \emph{potentially killable} if 
$$\traces(\STS^m)\restrict{\InVars\cup \OutVars} \nsubseteq
\traces(\STS)\restrict{\InVars\cup \OutVars}$$

\noindent Test $t$ for $\STS$ of length $n$ \emph{potentially kills} $\STS^m$ if 
$$\{q[0,n] \mid q \in \traces(\STS^m) \wedge q[0,n]\restrict{\InVars} =
t\restrict{\InVars}\}\restrict{\InVars\cup \OutVars}
\nsubseteq \{p[0,n] \mid p \in \traces(\STS)\}\restrict{\InVars\cup \OutVars}.$$

\end{definition}

\begin{definition}[Definitely killable]
\label{def:definitelykillable}
$\STS^m$ is \emph{definitely killable} if there is a sequence of inputs $\vec{I} \in \traces(\STS)\restrict{\InVars}$, such that 
$$\{q \in \traces(\STS^m) \mid q\restrict{\InVars} = \vec{I}\}\restrict{\OutVars} \cap \{p \in \traces(\STS) \mid p\restrict{\InVars} =  \vec{I}\}\restrict{\OutVars} = \emptyset$$

\noindent Test $t$ for $\STS$ of length $n$ \emph{definitely kills} $\STS^m$ if 
\begin{equation*}
\begin{split}
&\{q[0,n] \mid q \in \traces(\STS^m) \wedge q[0,n]\restrict{\InVars} = t\restrict{\InVars}\}\restrict{\OutVars} \cap \\
&\{p[0,n] \mid p \in \traces(\STS) \wedge p[0,n]\restrict{\InVars} = t\restrict{\InVars}\}\restrict{\OutVars} = \emptyset
\end{split}
\end{equation*}
\end{definition}

\begin{definition}[Equivalent Mutant]
$\STS^m$ is \emph{equivalent} iff $\STS^m$ is not potentially killable.
\end{definition}

Note that definite killability is stronger than potential killabilty,
though for deterministic systems, the two notions coincide.

\begin{restatable}{proposition}{killability}
\label{prop:killability}
If $\STS^m$ is definitely killable then $\STS^m$ is potentially killable.\\
If $\STS^m$ is deterministic then: $\STS^m$ is potentially killable iff $\STS^m$ is definitely killable.
\end{restatable}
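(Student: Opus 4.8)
The plan is to establish the two implications separately, using essentially only the unfolding of Definitions \ref{def:potentiallykillable} and \ref{def:definitelykillable} together with input-enabledness ($\traces(\STS^m)\restrict{\InVars} = \traces(\STS)\restrict{\InVars}$) and the fact that every state has a successor, so that the relevant trace sets are non-empty.

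For the first statement, I would start from an input sequence $\vec I \in \traces(\STS)\restrict{\InVars}$ witnessing definite killability. Input-enabledness gives $\vec I \in \traces(\STS^m)\restrict{\InVars}$, so some (infinite) trace $q \in \traces(\STS^m)$ has $q\restrict{\InVars} = \vec I$. The claim is then that $q\restrict{\InVars\cup\OutVars} \notin \traces(\STS)\restrict{\InVars\cup\OutVars}$, which is precisely potential killability. This follows by contradiction: a trace $p \in \traces(\STS)$ with $p\restrict{\InVars\cup\OutVars} = q\restrict{\InVars\cup\OutVars}$ would satisfy $p\restrict{\InVars} = \vec I$ and $p\restrict{\OutVars} = q\restrict{\OutVars}$, putting $q\restrict{\OutVars}$ into the intersection that definite killability forces to be empty.

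For the second statement, the direction ``definitely $\Rightarrow$ potentially'' is the first statement, so only ``potentially $\Rightarrow$ definitely'' uses determinism. Here I would take a trace $q \in \traces(\STS^m)$ whose input/output projection is not realised in $\traces(\STS)$ (which exists by potential killability) and set $\vec I := q\restrict{\InVars}$; once more $\vec I \in \traces(\STS)\restrict{\InVars}$ by input-enabledness. The crucial observation is that determinism of $\STS^m$ makes $q$ the \emph{only} trace of $\STS^m$ with input projection $\vec I$: the initial state and output are unique, and inductively a state together with an input determines the successor state and the emitted output via $\delta^m$. Hence $\{q' \in \traces(\STS^m) \mid q'\restrict{\InVars} = \vec I\}\restrict{\OutVars} = \{q\restrict{\OutVars}\}$, and disjointness of this singleton from $\{p \in \traces(\STS) \mid p\restrict{\InVars} = \vec I\}\restrict{\OutVars}$ is the same contradiction argument as above, so $\vec I$ witnesses definite killability.

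I do not expect a serious obstacle — the result is basically a bookkeeping exercise over the definitions. The one place that warrants care is the uniqueness claim in the deterministic case: one must use that the determinism condition pins down not only the successor state but also the output attached to each transition, so that the projected output set is truly a singleton rather than merely a set of traces sharing a state sequence. A secondary routine check is that $p\restrict{\InVars} = q\restrict{\InVars}$ and $p\restrict{\OutVars} = q\restrict{\OutVars}$ together give $p\restrict{\InVars\cup\OutVars} = q\restrict{\InVars\cup\OutVars}$, which holds since input, output, and state variables are disjoint.
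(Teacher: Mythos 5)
Your proof is correct and takes essentially the same route as the paper: the deterministic direction is argued exactly as in the paper's proof, by observing that determinism of $\STS^m$ makes $\{q' \in \traces(\STS^m) \mid q'\restrict{\InVars} = \vec I\}\restrict{\OutVars}$ a singleton so that non-membership of $q\restrict{\OutVars}$ in the model's output set yields the required empty intersection. For the first implication your contradiction argument via input-enabledness is a clean rendering of what the paper intends (its own text for that part merely unfolds the definition of potential killability), so no gap remains.
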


The following example shows a definitely killable mutant, a mutant that is only potentially killable, and an equivalent mutant.

\begin{example}
\label{ex:potentially_definitely}

The mutant in Figure~\ref{fig:beverage_code}, 
is definitely killable, since we can force the system into a state 
in which both possible outputs of the original system
(\texttt{coff}, \texttt{tea}) differ from the only possible output
of the mutant ($\varepsilon$).

Consider a mutant that introduces non-determinism by replacing
line 7 with the code \texttt{\textbf{if}(in=fill):(mut ? \{1,2\} : 2)},
indicating that the machine is filled with either $1$ or $2$ units of water.
This mutant is potentially but not definitely killable, as
only one of the non-deterministic choices leads to a deviation of the outputs.

Finally, consider a mutant that replaces line 4 
with \texttt{\textbf{if}(in=req\&wtr>0):(mut ? coff : \{coff,tea\}})
and removes the \texttt{mut} branch of line 7, yielding a machine
that always creates coffee.
Every implementation of this mutant is also correct with
respect to the original model. Hence,
we consider the mutant equivalent, even though the original model,
unlike the mutant, can output \texttt{tea}.

\end{example}


\section{Killing with hyperproperties}
\label{sec:hyperkilling}

In this section, we provide a formalization of potential and definite killability in terms of
HyperLTL, assert the correctness of our formalization with respect to
Section \ref{sec:formalization-kr}, and explain how tests
can be extracted by model checking the HyperLTL properties.
All HyperLTL formulas depend on inputs and outputs of the model, 
but are model-agnostic otherwise.
The idea of all presented formulas is to discriminate between traces of
the original model ($\Box \neg \mutvar_{\pi}$) and traces of the mutant
($\Box\mutvar_{\pi}$).
Furthermore, we quantify over pairs $(\pi,\pi')$ of traces with globally equal inputs $(\Box \bigwedge_{i \in \AP_{\InVars}} i_{\pi} \leftrightarrow i_{\pi'})$ 
and express that such pairs will eventually have different outputs $(\lozenge \bigvee_{o \in \AP_{\OutVars}} \neg( o_{\pi} \leftrightarrow o_{\pi'}))$.

\subsection{Deterministic Case}
To express killability (potential and definite) of a deterministic model and mutant, we need to find a trace of the model ($\exists \pi$)
such that the trace of the mutant with the same inputs ($\exists \pi'$) eventually diverges in outputs, formalized by $\phi_1$ as follows:
\begin{equation*}
	\phi_1(\InVars,\OutVars) := \exists \pi \exists \pi' \Box (\neg \mutvar_{\pi} \wedge \mutvar_{\pi'} \bigwedge_{i \in \AP_{\InVars}} i_{\pi} \leftrightarrow i_{\pi'}) \wedge \lozenge (\bigvee_{o \in \AP_{\OutVars}} \neg( o_{\pi} \leftrightarrow o_{\pi'} ) ) \label{eq:detkill}
\end{equation*}

\begin{restatable}{proposition}{dkill}
	\label{thm:det}
	For a deterministic model $\STS$ and mutant $\STS^m$ it holds that
	$$\STS^{c(m)} \models \phi_1(\InVars,\OutVars) \text{ iff } \STS^m \text{ is killable}.$$

	\noindent If $t$ is a $\pi$-witness for $\STS^{c(m)} \models \phi_1(\InVars,\OutVars)$, then
 $t[0,n]\restrict{\InVars \cup \OutVars}$ kills $\STS^m$ (for some $n\in\mathbb{N}$).
	
\end{restatable}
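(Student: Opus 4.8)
The plan is to prove the two directions of the biconditional separately at the level of traces, and then to read off the witness-extraction claim from the forward direction. Two facts will be used repeatedly. First, the two-way correspondence between $\STS^{c(m)}$ and its components: a trace of $\STS^{c(m)}$ along which $\mutvar$ is constantly $\bot$ (resp.\ $\top$) projects, after dropping $\mutvar$, to a trace of $\STS$ (resp.\ $\STS^m$)---this is stated in the excerpt---and conversely every trace of $\STS$ (resp.\ $\STS^m$) lifts to such a trace of $\STS^{c(m)}$, which is immediate from the definition of $\alpha^{c(m)}$ and $\delta^{c(m)}$ together with the conjunct $\mutvar\leftrightarrow\mutvar'$. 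Second, I rely on the abstraction assumption underlying the whole section: the propositions in $\AP_{\InVars}$ (resp.\ $\AP_{\OutVars}$) separate distinct inputs (resp.\ outputs), so that $\Box\bigwedge_{i\in\AP_{\InVars}} i_\pi\leftrightarrow i_{\pi'}$ really expresses equality of the two input sequences and $\lozenge\bigvee_{o\in\AP_{\OutVars}}\neg(o_\pi\leftrightarrow o_{\pi'})$ really expresses that the output sequences differ at some position.

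For ``$\STS^{c(m)}\models\phi_1 \Rightarrow \STS^m$ killable'', take $\pi$- and $\pi'$-witnesses $q,q'\in\traces(\STS^{c(m)})$ of a satisfying assignment. By the conjuncts $\Box\neg\mutvar_\pi$ and $\Box\mutvar_{\pi'}$ we get $\bar q\defn q\restrict{\InVars\cup\OutVars\cup\StateVars}\in\traces(\STS)$ and $\bar q'\defn q'\restrict{\InVars\cup\OutVars\cup\StateVars}\in\traces(\STS^m)$; the input conjunct forces $\bar q\restrict{\InVars}=\bar q'\restrict{\InVars}=:\vec I$, and the $\lozenge$-conjunct yields an $n$ with $\bar q[n]\restrict{\OutVars}\ne\bar q'[n]\restrict{\OutVars}$. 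Since $\STS$ is deterministic, $\bar q\restrict{\OutVars}$ is the unique output sequence of $\STS$ on input $\vec I$, so $\bar q'\restrict{\OutVars}\notin\{p\in\traces(\STS)\mid p\restrict{\InVars}=\vec I\}\restrict{\OutVars}$; hence $\STS^m$ is potentially killable and, by Proposition~\ref{prop:killability}, killable. The same reasoning gives the test-extraction claim: given any $\pi$-witness $t$ of $\phi_1$, run the argument above with $q=t$, writing $t$ also for its $\STS$-projection; we obtain the corresponding $\pi'$-companion trace projecting to some $\bar q'\in\traces(\STS^m)$ with the same inputs as $t$ and an $n$ with $\bar q'[n]\restrict{\OutVars}\ne t[n]\restrict{\OutVars}$. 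Then $t[0,n]\restrict{\InVars\cup\OutVars}$ is a genuine test for $\STS$, and $\bar q'[0,n]\restrict{\InVars\cup\OutVars}$ lies in $\{r[0,n]\mid r\in\traces(\STS^m),\, r[0,n]\restrict{\InVars}=t\restrict{\InVars}\}\restrict{\InVars\cup\OutVars}$ but not in $\{p[0,n]\mid p\in\traces(\STS)\}\restrict{\InVars\cup\OutVars}$, because determinism of $\STS$ fixes the length-$n$ prefix of any $\STS$-trace with these inputs to equal $t$'s prefix, which differs from $\bar q'$'s at position $n$; so $t[0,n]\restrict{\InVars\cup\OutVars}$ potentially---hence definitely---kills $\STS^m$.

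For the converse, assume $\STS^m$ is killable; by Proposition~\ref{prop:killability} it is potentially killable, so some $\bar q'\in\traces(\STS^m)$ has $\bar q'\restrict{\InVars\cup\OutVars}\notin\traces(\STS)\restrict{\InVars\cup\OutVars}$. Put $\vec I\defn\bar q'\restrict{\InVars}$; equal input-enabledness of $\STS^m$ and $\STS$ gives some $\bar q\in\traces(\STS)$ with $\bar q\restrict{\InVars}=\vec I$, unique by determinism of $\STS$, and necessarily $\bar q\restrict{\OutVars}\ne\bar q'\restrict{\OutVars}$ (otherwise the $\InVars\cup\OutVars$-projections would agree), say differing at position $n$. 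Lifting $\bar q$ and $\bar q'$ to traces $q,q'$ of $\STS^{c(m)}$ with $\mutvar\equiv\bot$ resp.\ $\mutvar\equiv\top$, the assignment $\pi\mapsto q,\ \pi'\mapsto q'$ satisfies the body of $\phi_1$ (the $\mutvar$ flags hold, inputs are globally equal, outputs differ at $n$), so $\STS^{c(m)}\models\phi_1$.

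The step I expect to require the most care is the interface between HyperLTL's view---equalities and inequalities of atomic propositions along aligned traces---and the set-theoretic killing definitions phrased in terms of input/output projections of $\traces(\STS)$ and $\traces(\STS^m)$. Making this airtight needs the abstraction assumption that $\AP$ separates inputs and separates outputs, the two-way lifting between $\STS^{c(m)}$ and $\STS,\STS^m$, and---in the test-extraction part---the observation that for deterministic $\STS$ the length-$n$ prefix of a trace is already determined by its inputs, which is exactly what upgrades ``some mutant trace deviates'' to ``the output sets of $\STS$ and $\STS^m$ on these inputs are disjoint''.
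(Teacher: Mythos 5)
Your proposal is correct and follows essentially the same route as the paper: the paper proves this proposition by reducing to the deterministic-model/non-deterministic-mutant case (Proposition~\ref{thm:det_ndet_potential}) and then invoking Proposition~\ref{prop:killability} for the coincidence of potential and definite killing, and your inlined argument---projecting/lifting witnesses via the $\mutvar$ flag, using equal input-enabledness for the converse, and using determinism of $\STS$ to upgrade a single deviating companion trace to non-membership in $\traces(\STS)\restrict{\InVars\cup\OutVars}$---is exactly the content of those two results. Your explicit treatment of the prefix-extraction step and of the assumption that $\AP$ separates inputs and outputs is if anything more careful than the paper's.
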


\subsection{Non-deterministic Case}

For potential killability of non-deterministic models and mutants,%
\footnote{Appendix \ref{appendix:proofs} covers
  deterministic models with non-deterministic mutants and vice-versa.}
we need to find a trace of the mutant ($\exists \pi$) such that all traces of the model with the same inputs ($\forall \pi'$) eventually diverge in outputs, expressed in $\phi_2$:

\begin{equation*}
	 \phi_2(\InVars,\OutVars) := \exists \pi \forall \pi' \Box (\mutvar_{\pi} \wedge \neg \mutvar_{\pi'}	\bigwedge_{i \in \AP_{\InVars}} i_{\pi} \leftrightarrow i_{\pi'}) \rightarrow \lozenge (\bigvee_{o \in \AP_{\OutVars}} \neg (o_{\pi} \leftrightarrow o_{\pi'})) \label{eq:potentialkill}
\end{equation*}

\begin{restatable}{proposition}{ndpotential}
\label{thm:ndet_ndet_potential}
	For non-deterministic $\STS$ and $\STS^m$, it holds that
	$$\STS^{c(m)} \models \phi_2(\InVars,\OutVars) \text{ iff } \STS^m \text{ is potentially killable.}$$
	
		\noindent If $s$ is a $\pi$-witness for $\STS^{c(m)} \models \phi_2(\InVars,\OutVars)$, 
		then for any trace $t \in \traces(\STS)$ with $t\restrict{\InVars} = s\restrict{\InVars}$, 
		$t[0,n]\restrict{\InVars \cup \OutVars}$ potentially kills $\STS^m$ (for some $n\in\mathbb{N}$).

\end{restatable}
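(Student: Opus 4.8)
The plan is to prove the biconditional $\STS^{c(m)} \models \phi_2 \iff \STS^m$ is potentially killable by unfolding the HyperLTL semantics of $\phi_2$ into a statement about traces of $\STS$ and $\STS^m$, using the correspondence between traces of $\STS^{c(m)}$ with $\mutvar$ fixed to $\bot$ (resp.\ $\top$) and traces of $\STS$ (resp.\ $\STS^m$) established in Section~\ref{sec:formalization-kr}. Concretely, I would first observe that since $\STS$ and $\STS^m$ are input-enabled over the same input sequences, every input sequence $\vec I$ occurring in $\traces(\STS)$ also occurs in $\traces(\STS^m)$ and vice versa; this lets me move freely between "there is a mutant trace with inputs $\vec I$'' and "$\vec I \in \traces(\STS)\restrict{\InVars}$''.

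The forward direction: suppose $\STS^{c(m)} \models \phi_2$. Then there is a $\pi$-witness $s \in \traces(\STS^{c(m)})$; the conjunct $\Box \mutvar_\pi$ that must hold (else the witness would vacuously satisfy the implication for \emph{every} $\pi'$ only if no $\pi'$ with $\Box\neg\mutvar_{\pi'}$ and matching inputs exists, which cannot happen by input-enabledness — I need to be slightly careful here and argue that a valid witness must indeed have $\Box\mutvar_\pi$, since a trace with $\Box\neg\mutvar_\pi$ could be matched against itself as $\pi'$ and would then need $\lozenge$ of an output difference, impossible) forces $s\restrict{\InVars\cup\OutVars\cup\StateVars} \in \traces(\STS^m)$. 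Let $\vec I = s\restrict{\InVars}$. For any $p \in \traces(\STS)$ with $p\restrict{\InVars} = \vec I$, the trace $p$ lifts to a trace of $\STS^{c(m)}$ with $\Box\neg\mutvar$, matching inputs with $s$, so the universally quantified implication fires and yields $\lozenge \bigvee_{o}\neg(o_s \leftrightarrow o_p)$, i.e.\ the outputs of $s$ and $p$ differ at some position $j_p$. Hence $s\restrict{\OutVars}$ lies in no $\{p \in \traces(\STS)\mid p\restrict{\InVars}=\vec I\}\restrict{\OutVars}$, so that set of mutant output-sequences (restricted to $\vec I$) is not contained in the corresponding set for $\STS$, which is exactly potential killability by Definition~\ref{def:potentiallykillable} (after noting the equivalence between the trace-level formulation and the prefix-level one in that definition via compactness/König's lemma if one wants the finite test, but for killability itself the infinite-trace statement suffices). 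For the test-extraction claim, I would take the first position $n$ such that the output difference has manifested against \emph{all} model traces sharing inputs $\vec I$ — here the obstacle is that there could be infinitely many such $p$, each diverging at a possibly unbounded $j_p$; I would resolve this with a König's-lemma / finite-branching argument: the set of length-$n$ prefixes is finite, and if for every $n$ some model trace agreed with $s$ on outputs up to $n$, one could build (by finite branching of $\STS$) an infinite model trace agreeing with $s$ everywhere on outputs, contradicting the divergence established above.

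The backward direction: suppose $\STS^m$ is potentially killable, so by Definition~\ref{def:potentiallykillable} there is a mutant trace $q \in \traces(\STS^m)$ with $q\restrict{\InVars\cup\OutVars} \notin \traces(\STS)\restrict{\InVars\cup\OutVars}$; set $\vec I = q\restrict{\InVars}$ and lift $q$ to a trace $s$ of $\STS^{c(m)}$ with $\Box\mutvar_s$. I claim $s$ is a $\pi$-witness of $\phi_2$: take any $\pi' \mapsto p'$ with $p' \in \traces(\STS^{c(m)})$; if $p'$ does not satisfy $\Box\neg\mutvar_{p'}$ or does not have inputs equal to $\vec I$ globally, the implication is vacuously true; otherwise $p'\restrict{\InVars\cup\OutVars\cup\StateVars} \in \traces(\STS)$ with input projection $\vec I$, and since $q\restrict{\InVars\cup\OutVars}$ is \emph{not} realized by any model trace with these inputs, $p'$ must differ from $s$ in some output at some position, giving $\lozenge \bigvee_o \neg(o_s\leftrightarrow o_{p'})$. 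Hence $\STS^{c(m)} \models \phi_2$.

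The main obstacle I anticipate is the passage from the infinite-trace semantics of HyperLTL (which only guarantees that for each matching model trace the outputs \emph{eventually} diverge, at a position depending on that trace) to the finite, uniform bound $n$ needed to extract an actual test of finite length $t[0,n]$ that definitely... rather, \emph{potentially} kills per Definition~\ref{def:potentiallykillable}. The resolution is a compactness argument exploiting that $\STS$ has finitely many successors per state (finite branching): if no uniform $n$ worked, one assembles an infinite model trace with inputs $\vec I$ whose outputs agree with $s$ forever, contradicting the divergence. Everything else is a fairly mechanical unwinding of the semantics together with the $\mutvar$-fixing correspondence and input-enabledness; I would state the finite-branching/König step as a small lemma and reuse it, since the same issue recurs in the proofs of Propositions~\ref{thm:det} and the definite-killing analogues.
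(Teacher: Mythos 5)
Your proof of the biconditional is correct and follows essentially the same route as the paper: unfold the semantics of $\phi_2$, use the correspondence between traces of $\STS^{c(m)}$ with $\mutvar$ fixed and traces of $\STS$ or $\STS^m$, and use equal input-enabledness to guarantee that the antecedent of the implication is satisfiable. Where you genuinely diverge is the test-extraction claim. The paper dispatches it with ``analogously to Proposition~\ref{thm:det_ndet_potential}'', but that argument picks the first output divergence against \emph{one} model trace $p$, which suffices only when $p$ is unique (deterministic model); in the non-deterministic case a prefix $s[0,n]$ that diverges from one model trace may still coincide with the length-$n$ I/O prefix of another, so no single $j_p$ yields a test. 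Your K\"onig's-lemma argument --- if for every $n$ some model trace prefix agrees with $s[0,n]$ on inputs and outputs, finite branching yields an infinite model trace agreeing with $s$ on I/O everywhere, contradicting what $\phi_2$ gives for all input-matching model traces --- is the right way to obtain the uniform bound $n$, and it also quietly handles model traces whose inputs match $t\restrict{\InVars}$ only on the prefix (the constructed limit trace has globally matching inputs). This does require finitely branching systems, which the paper assumes only implicitly; stating it as a standing hypothesis, as you suggest, is an improvement. One caveat: your parenthetical argument that a $\pi$-witness must satisfy $\Box\mutvar_{\pi}$ does not work as written --- if $\Pi(\pi)$ has $\neg\mutvar$ throughout, then matching it against itself as $\pi'$ makes the antecedent $\Box(\mutvar_{\pi}\wedge\cdots)$ \emph{false}, so the implication is vacuously true rather than forcing an output divergence; under the literal parse $\phi_2$ is in fact trivially satisfiable by any model trace, and one must read the $\mutvar$ conjuncts as side conditions selecting the witness (as the paper's own proof implicitly does). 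Flag that reading explicitly rather than trying to derive it.
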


To express definite killability, we need to find a sequence of inputs of the model ($\exists \pi$) and 
compare all non-deterministic outcomes of the model ($\forall \pi'$) 
to all non-deterministic outcomes of the mutant ($\forall \pi''$) for these inputs, as formalized by $\phi_3$:

\begin{equation*}
\begin{split}
	\phi_3(\InVars,\OutVars) := \exists \pi \forall \pi' \forall \pi'' \Box \big(&\neg \mutvar_{\pi} \wedge \mutvar_{\pi'} \wedge \neg \mutvar_{\pi''}  \\
	& \bigwedge_{i \in \AP_{\InVars}} i_{\pi} \leftrightarrow i_{\pi'} \wedge i_{\pi} \leftrightarrow i_{\pi''}\big) \rightarrow 
	\lozenge \big(\bigvee_{o \in \AP_{\OutVars}} \neg(o_{\pi'} \leftrightarrow o_{\pi''})\big)
\end{split} \label{eq:definitekill}
\end{equation*}

In Figure~\ref{fig:beverage_hyperproperty}, we present an instance of $\phi_3$ for our running example.

\begin{restatable}{proposition}{nddefinite}
\label{thm:ndet_ndet_definite}
	For non-deterministic $\STS$ and $\STS^m$, it holds that
	$$\STS^{c(m)} \models \phi_3(\InVars,\OutVars) \text{ iff } \STS^m \text{ is definitely killable.}$$
	
	\noindent If $t$ is a $\pi$-witness for $\STS^{c(m)} \models \phi_3(\InVars,\OutVars)$, then
 $t[0,n]\restrict{\InVars \cup \OutVars}$ definitely kills $\STS^m$ (for some $n\in\mathbb{N}$).
\end{restatable}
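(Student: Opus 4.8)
The plan is to prove both directions of the biconditional $\STS^{c(m)} \models \phi_3(\InVars,\OutVars) \iff \STS^m$ is definitely killable, using the correspondence established earlier: a trace $p \in \traces(\STS^{c(m)})$ with $p\restrict{\{\mutvar\}} = \{\bot\}^{\omega}$ projects to a trace of $\STS$, and one with $p\restrict{\{\mutvar\}} = \{\top\}^{\omega}$ projects to a trace of $\STS^m$. Throughout I would silently pass between concrete traces and their $\AP$-images, noting that since the atomic propositions in $\AP_{\InVars}$ and $\AP_{\OutVars}$ characterize inputs and outputs, agreement of two traces on all input propositions at every position is equivalent to $p\restrict{\InVars} = p'\restrict{\InVars}$, and eventual disagreement on some output proposition is equivalent to $p\restrict{\OutVars} \neq p'\restrict{\OutVars}$ (i.e., the restricted output sequences differ at some position). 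This bridging lemma is the routine glue and I would state it once.

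For the ($\Leftarrow$) direction, assume $\STS^m$ is definitely killable, so there is $\vec{I} \in \traces(\STS)\restrict{\InVars}$ with $\{q \in \traces(\STS^m) \mid q\restrict{\InVars} = \vec{I}\}\restrict{\OutVars} \cap \{p \in \traces(\STS) \mid p\restrict{\InVars} = \vec{I}\}\restrict{\OutVars} = \emptyset$. Since $\STS^m$ is equally input-enabled as $\STS$, there is a trace of $\STS^{c(m)}$ with $\mutvar \equiv \bot$ whose input projection is $\vec{I}$; call its $\AP$-image the witness for $\pi$. I must check this witness validates $\phi_3$: take arbitrary $\pi' , \pi''$ and suppose the antecedent holds, i.e., $\pi'$ has $\mutvar \equiv \top$, $\pi''$ has $\mutvar \equiv \bot$, and both share all inputs with $\pi$ globally. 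Then $\pi'$ projects to some $q \in \traces(\STS^m)$ with $q\restrict{\InVars} = \vec{I}$, and $\pi''$ projects to some $p \in \traces(\STS)$ with $p\restrict{\InVars} = \vec{I}$. By disjointness of the output sets, $q\restrict{\OutVars} \neq p\restrict{\OutVars}$, hence they differ at some position, which via the bridging lemma gives $\lozenge \bigvee_{o} \neg(o_{\pi'} \leftrightarrow o_{\pi''})$. So the consequent holds and $\phi_3$ is satisfied.

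For the ($\Rightarrow$) direction, assume $\STS^{c(m)} \models \phi_3$, and let $t$ be a $\pi$-witness. Set $\vec{I} := t\restrict{\InVars}$; since the $\pi$-witness is a trace of $\STS^{c(m)}$ and (by the conjunct $\Box \neg \mutvar_{\pi}$ forced whenever the antecedent can ever be satisfied — here I need to be slightly careful, see below) it projects to a trace of $\STS$, we have $\vec{I} \in \traces(\STS)\restrict{\InVars}$. Now take any $q \in \traces(\STS^m)$ and any $p \in \traces(\STS)$ with $q\restrict{\InVars} = p\restrict{\InVars} = \vec{I}$. Lift $q$ to a trace of $\STS^{c(m)}$ with $\mutvar \equiv \top$ and $p$ to one with $\mutvar \equiv \bot$; instantiate $\pi' \mapsto \AP(q)$, $\pi'' \mapsto \AP(p)$. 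The antecedent of $\phi_3$ holds for this assignment, so the consequent gives some output proposition and position witnessing $q\restrict{\OutVars} \neq p\restrict{\OutVars}$. Since $q, p$ were arbitrary, the two output sets are disjoint, so $\STS^m$ is definitely killable. The finite-prefix statement — that $t[0,n]\restrict{\InVars\cup\OutVars}$ definitely kills $\STS^m$ for some $n$ — then follows by a compactness/König's-lemma argument: since $\STS$ and $\STS^m$ are finitely branching and every trace with input projection $\vec{I}$ eventually deviates, the set of positions at which deviation can be postponed is finite, so a uniform bound $n$ exists; pick $n$ to be that bound.

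The main obstacle I anticipate is the subtlety in the $\pi$-quantifier: $\phi_3$ only constrains $\mutvar_{\pi}$ inside the scope of $\Box(\dots)$ on the left of the implication, so strictly the formula does not outright force $\pi$ to be an original-model trace — a $\pi$-witness could in principle have $\mutvar \equiv \top$, making the antecedent vacuously false. I would handle this by observing that we are free to choose the $\pi$-witness: if any $\pi$-witness exists, then (since the whole left conjunct including $\neg\mutvar_\pi$ must be consistent for the property to be non-vacuous, and by equal input-enabledness) we can choose one with $\mutvar \equiv \bot$ and the same inputs, so without loss of generality $\pi$ ranges over $\traces(\STS)$; alternatively, one restricts attention to satisfying assignments where the antecedent is realizable, which is exactly the intended reading and matches how $\phi_2$ was treated. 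The second delicate point is the uniform bound $n$ for the finite-test claim, where I would need the hypothesis (implicit in the system model) that the STSs are finitely branching so that König's lemma applies; I would state this explicitly as the reason a single $n$ works for all competing traces simultaneously.
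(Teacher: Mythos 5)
Your proof follows essentially the same route as the paper's: the ($\Leftarrow$) direction instantiates $\pi$ with a model trace over the killing input sequence $\vec{I}$ and derives the consequent from disjointness of the two output sets; the ($\Rightarrow$) direction lifts arbitrary $q\in\traces(\STS^m)$ and $p\in\traces(\STS)$ sharing inputs with the witness to $\pi'$- and $\pi''$-assignments satisfying the antecedent and reads off the output deviation. Two remarks. First, for the finite-prefix claim the paper only says the test is obtained ``analogously'' to the potential-killing case, where a single deviating pair of traces suffices and one takes the first position of disagreement; you correctly observe that definite killing requires a \emph{uniform} bound $n$ over all competing pairs and supply the missing K\"onig's-lemma argument (sound here because the systems are finite-state, hence finitely branching). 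That is a genuine improvement on the paper's proof. Second, your worry about $\neg\mutvar_\pi$ sitting inside the antecedent is well founded---as written, $\phi_3$ is vacuously satisfiable by any $\pi$ with $\mutvar\equiv\top$---but your first patch (``choose another witness with $\mutvar\equiv\bot$ and the same inputs'') does not work: for that trace the antecedent becomes realizable, so the consequent must then actually hold, which is not guaranteed. Your fallback reading, restricting attention to $\pi$-witnesses for which the antecedent is realizable (i.e., those satisfying $\Box\neg\mutvar_\pi$), is exactly the convention the paper adopts implicitly, and under it your argument goes through.
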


To generate tests, we use model checking to verify whether the
conditional mutant satisfies the appropriate HyperLTL formula presented above
and obtain test cases as finite prefixes of witnesses for satisfaction.

\section{Non-deterministic models in practice}
\label{sec:tcg}


As stated above, checking the validity of the
hyperproperties in Section \ref{sec:hyperkilling} for
a given model and mutant enables test-case generation. To the
best of our knowledge, {\sc MCHyper}~\cite{Finkbeiner2015} is the only currently
available HyperLTL model checker.
Unfortunately, {\sc MC\-Hyper} is unable to model check
formulas with alternating quantifiers.%
\footnote{While \emph{satisfiability} in the presence of quantifier alternation is supported to some extent \cite{FinkbeinerHH18}.}
Therefore, we are currently limited to checking $\phi_1(\InVars,\OutVars)$
for deterministic models, since witnesses of $\phi_1$
may not satisfy $\phi_2$ in the presence of non-determinism.

To remedy this issue, we propose a transformation that
makes non-determinism \emph{controllable} by means of
additional inputs and yields a deterministic STS. 
The transformed model over-approximates killability in the
sense that the resulting test cases only kill the original
mutant if non-determinism can also be controlled in the system
under test. However, if equivalence can be established for the transformed model,
then the original non-deterministic mutant is also equivalent.

\subsection{Controlling non-determinism in STS}
\label{subsec:determinization}

The essential idea of our transformation is to introduce a fresh
input variable that enables the control of non-deterministic
choices in the conditional mutant $\STS^{c(m)}$. The new input is
used carefully to ensure that choices are consistent for the model
and the mutant encoded in $\STS^{c(m)}$. W.l.o.g., we introduce
an input variable $nd$ with a domain sufficiently large to
encode the non-deterministic choices in $\alpha^{c(m)}$ and
$\delta^{c(m)}$, and write $nd(X,O)$ to denote a value of
$nd$ that uniquely corresponds to state $X$ with output $O$.
Moreover, we add a fresh Boolean variable $x^{\tau}$ to $\StateVars$ 
used to encode a fresh initial state.

Let $\StateVars_+ \defn \StateVars \cup \{\mutvar\}$ and
$X_{+},X'_{+},I,O$ be valuations of
$\StateVars_{+}$, $\StateVars'_{+}$, $\InVars$, and $\OutVars$,
and $X$ and $X'$ denote ${X_+}\restrict{\StateVars}$ and
${X_+'}\restrict{\StateVars'}$, respectively.
Furthermore, $\psi(X)$, $\psi(X_+,I)$, and $\psi(O,X'_+)$
are formulas uniquely satisfied by 
$X$, $(X_+,I)$, and $(O,X'_+)$ respectively.

Given conditional mutant
$\STS^{c(m)} \defn \langle \InVars,\OutVars,\StateVars_+,\alpha^{c(m)},\delta^{c(m)} \rangle$, we define its controllable counterpart
$D(\STS^{c(m)}) \defn \langle \InVars \cup \{nd\},\OutVars,\StateVars_+ \cup \{x^{\tau}\},D(\alpha^{c(m)}),D(\delta^{c(m)}) \rangle$.
We initialize $D(\delta^{c(m)}) \defn \delta^{c(m)}$ and incrementally
add constraints as described below.

\paragraph{Non-deterministic initial conditions:}
Let $X$ be an arbitrary, fixed state.
The unique fresh initial state is $X^{\tau} \defn X[x^{\tau}]$, which, together with an empty output, we enforce by the new initial conditions predicate:
$$D(\alpha^{c(m)}) \defn \psi(X^{\tau}, O_{\varepsilon})$$

We add the conjunct $\neg \psi(X^{\tau}) \rightarrow \neg {x^{\tau}}'$ to $D(\delta^{c(m)})$, 
in order to force $x^{\tau}$ evaluating to $\bot$ in all states other than $X^{\tau}$.
In addition, we add transitions from $X^{\tau}$ to all pairs of initial
states/outputs in $\alpha^{c(m)}$.
To this end, we first partition the pairs in $\alpha^{c(m)}$ into
pairs shared by and exclusive to the model and the mutant:
\begin{equation*}
\begin{split}
 J^{\cap} &\defn \{(O,X_+) \mid X,O \models \alpha^{c(m)}\}\\
J^{orig}  &\defn \{(O,X_+) \mid \neg X_+(\mutvar) \wedge (X_+,O \models \alpha^{c(m)}) \wedge (X_+[\mutvar],O \not\models \alpha^{c(m)})\}\\
J^{mut} &\defn \{(O,X_+) \mid X_+(\mutvar) \wedge (X_+,O \models \alpha^{c(m)}) \wedge (X_+[\neg{\mutvar}],O \not\models \alpha^{c(m)})\}\\
\end{split}
\end{equation*}

For each $(O,X_+) \in J^{\cap} \cup J^{mut} \cup J^{orig}$,
we add the following conjunct to $D(\delta^{c(m)})$:
$$\psi(X^{\tau}) \wedge nd(O,X) \rightarrow \psi(O,X_+')$$

In addition, for inputs $nd(O,X)$ without corresponding target state
in the model or mutant, we add conjuncts to $D(\delta^{c(m)})$ that represent
self loops with empty outputs:
\begin{equation*}
\begin{split}
&\forall (O,X_+) \in J^{orig}: \psi(X^{\tau}[\mutvar]) \wedge nd(O,X) \rightarrow \psi(O_{\varepsilon},{X^{\tau}}'[\mutvar]) \\
&\forall (O,X_+) \in J^{mut}: \psi(X^{\tau}[\neg\mutvar]) \wedge nd(O,X) \rightarrow \psi(O_{\varepsilon},{X^{\tau}}'[\neg\mutvar])
\end{split}
\end{equation*}
%

\paragraph{Non-deterministic transitions:}
Analogously to initial states, for each state/input pair, we partition
the successors into successors shared or exclusive to model or mutant:
\begin{equation*}
\begin{split}
T^{\cap}_{(X_+,I)}  & \defn \{(X_+,I,O,X_+') \mid X \trans{I,O} X'\} \\
T^{orig}_{(X_+,I)} & \defn \{(X_+,I,O,X_+') \mid \neg X_+(\mutvar) \wedge (X_+ \trans{I,O} X_+') \wedge \neg(X_+[\mutvar] \trans{I,O} X_+')\} \\
T^{mut}_{(X_+,I)} & \defn \{(X_+,I,O,X_+') \mid X_+(\mutvar) \wedge (X_+ \trans{I,O} X_+') \wedge \neg(X_+[\neg\mutvar] \trans{I,O} X_+')\} \\
\end{split}
\end{equation*}
A pair $(X_+,I)$ causes non-determinism 
if
\begin{displaymath}
|(T^{\cap}_{(X_+,I)}~\cup~T^{orig}_{(X_+,I)})\restrict{\StateVars \cup \InVars \cup \OutVars \cup \StateVars'}| > 1 \text{~or~}
|(T^{\cap}_{(X_+,I)}~\cup~T^{mut}_{(X_+,I)})\restrict{\StateVars \cup \InVars \cup \OutVars \cup \StateVars'}|>1.
\end{displaymath}

For each pair $(X_+,I)$ that causes non-determinism and each $(X_+,I,O_j,{X_{+j}'}) \in T^{\cap}_{(X_+,I)} \cup T^{mut}_{(X_+,I)} \cup T^{orig}_{(X_+,I)}$, we add the following conjunct to $D(\delta^{c(m)})$:
$$\psi(X_+,I) \wedge nd(O_j,X_j) \rightarrow \psi(O_j,{X_{+j}'})$$

Finally, we add conjuncts representing self loops with empty output
for inputs that have no corresponding transition in the model
or mutant:

\begin{equation*}
\begin{split}
&\forall (X_+,I,O_j,{X_{+j}'}) \in T^{orig}_{(X_+,I)}: \psi(X_+[\mutvar],I) \wedge nd(O_j,{X_j}) \rightarrow \psi(O_{\varepsilon},X_{+j}'[\mutvar]) \\
&\forall (X_+,I,O_j,{X_{+j}'}) \in T^{mut}_{(X_+,I)}: \psi(X_+[\neg\mutvar],I) \wedge nd(O_j,{X_j}) \rightarrow \psi(O_{\varepsilon},X_{+j}'[\neg\mutvar])
\end{split}
\end{equation*}

The proposed transformation has the following properties:

\begin{restatable}{proposition}{transformationproperties}

Let $\STS$ be a model with inputs $\InVars$, outputs $\OutVars$, and mutant $\STS^m$ then

\begin{enumerate}
	\item \label{enum:deterministic}$D(\STS^{c(m)})$ is deterministic (up to $\mutvar$).
	\item \label{enum:traceinclusion}$\traces(\STS^{c(m)})\restrict{\StateVars_+ \cup \InVars \cup \OutVars} \subseteq \traces(D(\STS^{c(m)}))[1,\infty]\restrict{\StateVars_+ \cup \InVars \cup \OutVars}$.
	\item \label{enum:equiv}$D(\STS^{c(m)}) \not\models \phi_1(\InVars,\OutVars)$ then $\STS^m$ is equivalent.
\end{enumerate}

\end{restatable}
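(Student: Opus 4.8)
The plan is to prove the three items in order; the first two are structural claims about $D(\STS^{c(m)})$, and the third is a short contrapositive argument built on the second. \textbf{For item~\ref{enum:deterministic}} I would read determinism off the syntactic shape of $D(\alpha^{c(m)})$ and of the conjuncts added to $D(\delta^{c(m)})$. Since $\psi(X^{\tau},O_{\varepsilon})$ fixes every variable except $\mutvar$ ($X^{\tau}$ being a valuation of $\StateVars\cup\{x^{\tau}\}$ only), the initial configuration is unique up to $\mutvar$. For the transition relation, fix a state $X_+$ over $\StateVars_+\cup\{x^{\tau}\}$ and an input comprising $I\in\InVars$ and a value $v$ of $nd$; I would show that at most one successor $(O,X_+')$ satisfies $D(\delta^{c(m)})$ ($\mutvar'$ being forced by $\mutvar\leftrightarrow\mutvar'$). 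If $X_+\models\psi(X^{\tau})$, i.e.\ $X_+$ is a fresh state, the conjunct $\neg\psi(X^{\tau})\rightarrow\neg {x^{\tau}}'$ is vacuous, and among the conjuncts indexed by $J^{\cap}\cup J^{orig}\cup J^{mut}$ together with their $O_{\varepsilon}$-self-loop variants at most one has a true antecedent, because the guards $\psi(\cdot)$ and codes $nd(\cdot)$ uniquely identify their arguments; that conjunct pins the successor, and for $nd$-values encoding no initial pair on the relevant side the matching self-loop conjunct does. Otherwise $\neg\psi(X^{\tau})$ forces ${x^{\tau}}'=\bot$, and I would split on whether $(X_+,I)$ causes non-determinism in $\STS^{c(m)}$: if not, $\delta^{c(m)}$ already determines the successor; if so, exactly one conjunct indexed by $T^{\cap}_{(X_+,I)}\cup T^{orig}_{(X_+,I)}\cup T^{mut}_{(X_+,I)}$ (or its self-loop) fires, by the same uniqueness. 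The step that needs genuine care — and the one I expect to be the main obstacle — is \emph{totality}: $D(\delta^{c(m)})$ is $\delta^{c(m)}$ conjoined with all these implications, so I must verify that from every reachable configuration and every input at least one successor survives, i.e.\ that the conjunction is never unsatisfiable; this is precisely the role of the inserted $O_{\varepsilon}$-self-loops for orphan $nd$-values, and checking that they cover every orphan case on both the $\mutvar=\bot$ and $\mutvar=\top$ sides is the bookkeeping heart of the argument.

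\textbf{For item~\ref{enum:traceinclusion}}, given $p=\langle(I_0,O_0,X_0),(I_1,O_1,X_1),\dots\rangle\in\traces(\STS^{c(m)})$ (each $X_j$ a valuation of $\StateVars_+$), I would exhibit $\tilde{p}\in\traces(D(\STS^{c(m)}))$ that starts in the fresh initial configuration — output $O_{\varepsilon}$, state $X^{\tau}$ with its $\mutvar$-component set to $X_0(\mutvar)$, arbitrary input — and then replays $p$ with a one-step delay: the transition reproducing $X_j\trans{I_j,O_{j+1}}X_{j+1}$ is taken with input $I_j$ augmented by the code $nd(O_{j+1},X_{j+1})$ and with each state of $p$ extended by $x^{\tau}=\bot$, while the first transition out of $X^{\tau}$ uses $nd(O_0,X_0)$. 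Since $(O_0,X_0)$ is an initial pair of $\alpha^{c(m)}$ it lies in $J^{\cap}\cup J^{orig}\cup J^{mut}$, so that first transition satisfies the corresponding conjunct; since every $X_j\trans{I_j,O_{j+1}}X_{j+1}$ lies in $T^{\cap}_{(X_j,I_j)}\cup T^{orig}_{(X_j,I_j)}\cup T^{mut}_{(X_j,I_j)}$, the replay transition satisfies the corresponding conjunct (or $\delta^{c(m)}$ alone when that pair is already deterministic), and every other added conjunct has a false antecedent along $\tilde{p}$ and is vacuously satisfied. Hence $\tilde{p}\in\traces(D(\STS^{c(m)}))$, and discarding the auxiliary $nd$ and $x^{\tau}$ gives $\tilde{p}[1,\infty]\restrict{\StateVars_+\cup\InVars\cup\OutVars}=p\restrict{\StateVars_+\cup\InVars\cup\OutVars}$, the claimed inclusion.

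\textbf{For item~\ref{enum:equiv}} I would prove the contrapositive: if $\STS^m$ is potentially killable then $D(\STS^{c(m)})\models\phi_1(\InVars,\OutVars)$. By Definition~\ref{def:potentiallykillable} fix $q\in\traces(\STS^m)$ with $q\restrict{\InVars\cup\OutVars}\notin\traces(\STS)\restrict{\InVars\cup\OutVars}$; since $\STS^m$ and $\STS$ are equally input-enabled, fix also $r\in\traces(\STS)$ with $r\restrict{\InVars}=q\restrict{\InVars}$. Viewing $q$ and $r$ as the $\mutvar=\top$ and $\mutvar=\bot$ traces of $\STS^{c(m)}$ and lifting both through item~\ref{enum:traceinclusion} using the \emph{same} arbitrary first input in the two lifts, I obtain $\tilde{q},\tilde{r}\in\traces(D(\STS^{c(m)}))$ with $\mutvar$ constantly $\top$ resp.\ $\bot$, whose $\InVars$-inputs coincide step by step (the common first input followed by $q\restrict{\InVars}=r\restrict{\InVars}$), with $\tilde{q}[1,\infty]\restrict{\InVars\cup\OutVars}=q\restrict{\InVars\cup\OutVars}$ and $\tilde{r}[1,\infty]\restrict{\InVars\cup\OutVars}=r\restrict{\InVars\cup\OutVars}$. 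Then $\pi=\AP(\tilde{r})$ and $\pi'=\AP(\tilde{q})$ meet the $\mutvar$-label and $\AP_{\InVars}$-agreement requirements of $\phi_1$ at every step; if $D(\STS^{c(m)})\not\models\phi_1(\InVars,\OutVars)$, the $\lozenge$-conjunct must fail for this pair, so $\tilde{r}$ and $\tilde{q}$ emit the same output at every step, whence $r\restrict{\OutVars}=q\restrict{\OutVars}$ and therefore $q\restrict{\InVars\cup\OutVars}=r\restrict{\InVars\cup\OutVars}\in\traces(\STS)\restrict{\InVars\cup\OutVars}$, contradicting the choice of $q$. Thus $D(\STS^{c(m)})\models\phi_1(\InVars,\OutVars)$, which is the contrapositive of item~\ref{enum:equiv}. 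This step relies on $\phi_1$ being instantiated with the \emph{original} inputs $\InVars$ rather than $\InVars\cup\{nd\}$, so that $\tilde{q}$ and $\tilde{r}$ need not agree on the auxiliary input $nd$ driving their (generally different) resolutions of non-determinism.
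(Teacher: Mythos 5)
Your proposal is correct and follows essentially the same route as the paper: determinism is read off the unique initial configuration and the $nd$-indexed conjuncts, trace inclusion is established by replaying each trace one step behind the fresh initial state $X^{\tau}$ via the codes $nd(O,X)$ (the paper packages this as an induction on prefix length), and item 3 is just the contrapositive of what the paper states in one line as a direct consequence of item 2. Your explicit remarks that $\phi_1$ ranges over $\AP_{\InVars}$ only (so the two lifted traces need not agree on $nd$) and that totality of $D(\delta^{c(m)})$ must be checked are details the paper leaves implicit, but they do not change the approach.
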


The transformed model is deterministic, since we enforce unique initial valuations and 
make non-deterministic transitions controllable through input $nd$.
Since we only add transitions or augment existing transitions with input $nd$,
every transition $X \trans{I,O} X'$ of $\STS^{c(m)}$ is still present in
$D(\STS^{c(m)})$ (when input $nd$ is disregarded).
The potential additional traces of Item~\ref{enum:traceinclusion} originate from the $O_{\varepsilon}$-labeled transitions for non-deterministic choices present
exclusively in the model or mutant. 
These transitions enable the detection of discrepancies between model and mutant
caused by the introduction or elimination of non-determinism by the mutation.

For Item~\ref{enum:equiv} (which is a direct consequence of
Item~\ref{enum:traceinclusion}), assume that the original non-deterministic
mutant is not equivalent (i.e., potentially killable). Then
$D(\STS^{c(m)}) \models \phi_1(\InVars,\OutVars)$, and the corresponding
witness yields a test which kills the mutant assuming non-determinism
can be controlled in the system under test. Killability purported
by $\phi_1$, however, could be an artifact of the transformation:
determinization potentially deprives the model of its ability
to match the output of the mutant by deliberately choosing a
certain non-deterministic transition. In
Example~\ref{ex:potentially_definitely}, we present an equivalent mutant
which is killable after the transformation, since we will detect the
deviating output \texttt{tea} of the model and $\varepsilon$ of the mutant.
Therefore, our transformation merely allows us to provide a lower
bound for the number of equivalent non-deterministic mutants.

\subsection{Controlling non-determinism in modeling languages}

The exhaustive enumeration of states ($J$) and transitions ($T$) outlined
in Section \ref{subsec:determinization} is purely theoretical and
infeasible in practice. However, an analogous result can often be
achieved by modifying the syntactic constructs of the underlying
modeling language that introduce non-determinism. 

\begin{itemize}
\item{\em Non-deterministic assignments.} Non-deterministic
choice over a finite set of elements $\{x'_1,\ldots x'_n\}$, as
provided by SMV~\cite{mcmillan92smv}, can readily be converted
into a case-switch construct over $nd$. More generally, explicit
non-deterministic assignments ${\tt x := \star}$ to state
variables {\tt x} \cite{Nelson1989} can be controlled by assigning
the value of $nd$ to {\tt x}. 

\item{\em Non-deterministic schedulers.} Non-determinism introduced
by concurrency can be controlled by introducing input variables
that control the scheduler (as proposed in \cite{Lal2009} for bounded
context switches).
\end{itemize}

In case non-determinism arises through variables under-specified in
transition relations, these variable values can be made inputs 
as suggested by Section \ref{subsec:determinization}. In general,
however, identifying under-specified variables automatically is
non-trivial.
 
\begin{example}
%
Consider again the SMV code in Figure~\ref{fig:beverage_code}, for which non-determinism can be made controllable
by replacing line \texttt{\textbf{if}(in=req\&wtr>0):\{coff,tea\}} with
lines \texttt{\textbf{if}(nd=0\&in=req\&wtr>0):coff}, \texttt{\textbf{elif}(nd=1\&in=req\&wtr>0):tea}
and
adding \texttt{\textbf{init}(nd):=\{0,1\}}.

Similarly, the STS representation of the beverage machine, given in Example~\ref{example:sts}, can be transformed by replacing the first two rules by the following two rules:
\small{
\begin{align*}
&\texttt{nd=0} \wedge \texttt{wtr>0} \wedge \texttt{in=req} \wedge \texttt{out=coff} \wedge \texttt{wtr'=wtr-1} \vee \\
&\texttt{nd=1} \wedge \texttt{wtr>0} \wedge \texttt{in=req} \wedge \texttt{out=tea} \wedge \texttt{wtr'=wtr-1} \vee 
\end{align*}
}
\end{example}

\section{Experiments}
\label{sec:experiments}

In this section, we present an experimental evaluation of the presented methods.
We start by presenting the deployed tool-chain.
Thereafter, we present a validation of our method on one case study with another model-based mutation testing tool.
Finally, we present quantitative results on a broad range of generic models.

\subsection{Toolchain}
\label{sec:results}
\label{sec:toolchain}
Figure~\ref{fig:pipeline} shows the toolchain that we use to produce test suites for models encoded in the modeling languages Verilog and SMV.
Verilog models are deterministic while SMV models can be non-deterministic.

\begin{figure}[t]
	\centering
	\includegraphics[width=.8\textwidth]{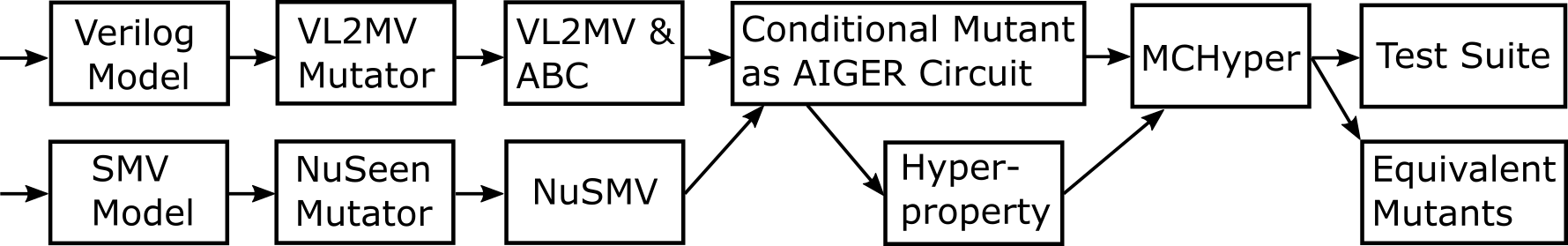}
	\caption{Tool Pipeline of our Experiments}
	\label{fig:pipeline}
\end{figure}
\noindent{\textbf{Variable annotation.}}
As a first step, we annotate variables as inputs and outputs.
These annotations were added manually for Verilog, and heuristically for SMV 
(partitioning variables into outputs and inputs).

\noindent{\textbf{Mutation and transformation.}}
We produce conditional mutants via a mutation engine.
For Verilog, we implemented our own mutation engine into the open source Verilog compiler VL2MV \cite{Cheng1993}.
We use standard mutation operators, replacing arithmetic operators, Boolean relations, Boolean connectives, constants, and assignment operators.
The list of mutation operators used for Verilog can be found in Appendix~\ref{appendix:proofs}.
For SMV models, we use the NuSeen SMV framework \cite{Arcaini2015,Arcaini2017},
which includes a mutation engine for SMV models.
The mutation operators used by NuSeen are documented in \cite{Arcaini2015}.
We implemented the transformation presented in Section~\ref{sec:tcg} into NuSeen and applied it to conditional mutants.

\noindent{\textbf{Translation.}}
The resulting conditional mutants from both modeling formalisms are translated into AIGER circuits \cite{Biere2011}.
AIGER circuits are essentially a compact representation for finite models.
The formalism is widely used by model checkers.
For the translation of Verilog models, VL2MV and the ABC model checker are used.
For the translation of SMV models, NuSMV is used.

\noindent{\textbf{Test suite creation.}}
We obtain a test suite, by model checking $\neg \phi_1(\InVars,\OutVars)$ on conditional mutants.
Tests are obtained as counterexamples, which are finite prefixes of $\pi$-witnesses to $\phi_1(\InVars,\OutVars)$.
In case we can not find a counter-example, and use a complete model checking method, the mutant is provably equivalent.

\paragraph{Case study test suite evaluation}

We compare the test suite created with our method for a case study,
with the model-based mutation testing tool MoMuT \cite{krenn_MoMuTUML_2015,Fellner2017}.
The case study is a timed version of a model of a car alarm system (CAS),
which was used in the model-based test case generation literature before 
\cite{aichernig_refinement_2014,DBLP:journals/stvr/AichernigBJKST15,Fellner2017}.

To this end, we created a test suite for a SMV formulation of the model.
We evaluated its strength and correctness on an Action System (the native modeling formalism of MoMuT) formulation of the model.
MoMuT evaluated our test suite by computing its mutation score
--- the ratio of killed- to the total number of- mutants--- with respect to 
Action System mutations, which are described in \cite{Fellner2017}.

This procedure evaluates our test suite in two ways.
Firstly, it shows that the tests are well formed, since MoMuT does not reject them.
Secondly, it shows that the test suite is able to kill mutants of a different modeling formalism than the one it was created from, 
which suggests that the test suite is also able to detect faults in implementations.

We created a test suite consisting of 61 tests, 
mapped it to the test format accepted by MoMuT.
MoMuT then measured the mutation score of our translated test suite on the Action System model, 
using Action System mutants.
The measured mutation score is 91\% on 439 Action System mutants.
In comparison, the test suite achieves a mutation score of 61\% on 3057 SMV mutants.
Further characteristics of the resulting test suite are presented in the following paragraphs.

\paragraph{Quantitative Experiments}

All experiments presented in this section were run in parallel on a machine with an Intel(R) Xeon(R) CPU at 2.00GHz, 
60 cores, and 252GB RAM.
We used 16 Verilog models which are presented in
\cite{Finkbeiner2015}, as well as models from opencores.org.
Furthermore, we used 76 SMV models that were also used in~\cite{Arcaini2015}.
Finally, we used the SMV formalism of CAS.
All models are available in \cite{benchmarks_git}.
Verilog and SMV experiments were run using property driven reachability based model checking with a time limit of 1 hour.
Property driven reachability based model checking did not perform well for CAS, 
for which we therefore switched to bounded model checking with a depth limit of 100.

\noindent{\textbf{Characteristics of models.}}
Table~\ref{tab:modelsize} present characteristics of the models.
For Verilog and SMV, we present average ($\mu$), standard deviation $(\sigma)$, minimum (Min), and maximum (Max) measures per model of the set of models.
For some measurements, we additionally present average (Avg.) or maximum (Max) number over the set of mutants per model.
We report the size of the circuits in terms of the number of inputs (\#Input), 
outputs (\#Output), state (\#State) variables as well as \emph{And} gates (\#Gates), 
which corresponds to the size of the transition relation of the model. 
Moreover, the row ``Avg. $\Delta$ \# Gates'' shows the average size difference (in $\%$ of $\#$ Gates) 
of the conditional mutant and the original model, where the average is over all mutants.
The last row of the table shows the number of the mutants that are generated for the models. 

We can observe that our method is able to handle models of respectable size, reaching thousands of gates.
Furthermore, $\Delta \#$ Gates of the conditional mutants is relatively low.
Conditional mutants allow us to compactly encode the original and mutated model in one model. 
Hyperproperties enable us to refer to and juxtapose traces from the original and mutated model, respectively. 
Classical temporal logic does not enable the comparison of different traces.
Therefore, mutation analysis by model checking classical temporal logic necessitates strictly separating traces of the original and the mutated model,
resulting in a quadratic blowup in the size of the input to the classical model-checker, compared to the size of the input to the hyperproperty model-checker.

\begin{table}%
\centering
\caption{Characteristics of Models}
\begin{tabular}{l|rrrr|rrrr|r}
\toprule
\textbf{Parameters} & \multicolumn{4}{c|}{Verilog} & \multicolumn{4}{c|}{SMV} & CAS  \\
& $\mu$ & $\sigma$ & Min & Max & $\mu$ & $\sigma$ & Min & Max &   \\
\midrule
\# Models & \multicolumn{4}{c|}{16} & \multicolumn{4}{c|}{76} &  1 \\ 
\midrule
\# Input & 186.19 & 309.59 & 4 & 949 & 8.99 & 13.42 & 0 & 88 & 58 \\
\# Output & 176.75 & 298.94 & 7 & 912 & 4.49 & 4.26 & 1 & 28 & 7 \\
\# State &  15.62 & 15.56 & 2 & 40 & - & - & - & - & - \\
\midrule
\# Gates & 4206.81 & 8309.32 & 98 & 25193 & 189.12 & 209.59 & 7 & 1015 & 1409 \\
Avg. $\Delta$ \# Gates & 3.98\% & 14.71\% & -10.2\% & 57.55\% & 8.14\% & 8.23\% & 0.22\% & 35.36\% & 0.86\% \\
\midrule
\# Mutants & 260.38 & 235.65 & 43 & 774 & 535.32 & 1042.11 & 1 & 6304 & 3057  \\
\bottomrule
\end{tabular}
\label{tab:modelsize}
\end{table}

\noindent{\textbf{Model checking results.}}
Table~\ref{tab:results} summarizes the quantitative results of our experiments.
The quantitative metrics we use for evaluating our test generation approach are
the mutation score (i.e. percentage of killed mutants) and the percentage of equivalent mutants, the number of generated tests, 
the amount of time required for generating them and the average length of the test cases. 
Furthermore, we show the number of times the resource limit was reached.
For Verilog and SMV this was exclusively the 1 hour timeout.
For CAS this was exclusively the depth limit 100.

Finally, we show the total test suite creation time, including times when reaching the resource limit.
The reported time assumes sequential test suite creation time.
However, since mutants are model checked independently, the process can easily be parallelized, which drastically reduces the total time needed to create a test suite for a model.
The times of the Verilog benchmark suite are dominated by two instances of the secure hashing algorithm (SHA), which are inherently hard cases for model checking.

We can see that the test suite creation times are in the realm of a few hours, 
which collapses to minutes when model checking instances in parallel.
However, the timing measures really say more about the underlying model checking methods than our proposed technique of mutation testing via hyperporperties.
Furthermore, we want to stress that our method is agnostic to which variant of model checking 
(e.g. property driven reachability, or bounded model checking) is used.
As discussed above, for CAS switching from one method to the other made a big difference.

The mutation scores average is around 60\% for all models.
It is interesting to notice that the scores of the Verilog and SMV models are similar on average, 
although we use a different mutation scheme for the types of models.
Again, the mutation score says more about the mutation scheme than our proposed technique.
Notice that we can only claim to report the mutation score, because, besides CAS, 
we used a complete model checking method (property driven reachability).
That is, in case, for example, 60\% of the mutants were killed and no timeouts occurred, then 40\% of the mutants are provably equivalent.
In contrast, incomplete methods for mutation analysis can only ever report lower bounds of the mutation score.
Furthermore, as discussed above, the 61.7\% of CAS translate to 91\% mutation score on a different set of mutants.
This indicates that failure detection capability of the produced test suites is well,
which ultimately can only be measured by deploying the test cases on real systems.

\vspace{-.5cm}
\begin{table}%
\centering
\caption{Experimental Results}
\begin{tabular}{l|rrrr|rrrr|r}
\toprule
\textbf{Metrics} & \multicolumn{4}{c|}{Verilog} & \multicolumn{4}{c|}{SMV} & CAS  \\
& $\mu$ & $\sigma$ & Min & Max & $\mu$ & $\sigma$ & Min & Max &   \\
\midrule
\textbf{Mutation Score} & 56.82\% & 33.1\% & 4.7\% & 99\% & 64.79\% & 30.65\% & 0\% & 100\% & 61.7 \% \\
\quad Avg. Test-case Len. & 4.26 & 1.65 & 2.21 & 8.05 & 15.41 & 58.23 & 4 & 461.52 & 5.92 \\
\quad Max Test-case Len. & 21.62 & 49.93 & 3 & 207 & 187.38 & 1278.56 & 4 & 10006 & 9 \\
Avg. Runtime & 83.08s & 267.53s & 0.01s & 1067.8s & 1.2s & 5.48s & - & 46.8s & 7.8s \\
\midrule
\textbf{Equivalent Mutants} & 33.21\% & 32.47\% & 0\% & 95.3\% & 35.21\% & 30.65\% & 0\% & 100\% & 0\% \\
Avg. Runtime& 44.77s & 119.58s & 0s & 352.2s & 0.7s & 2.02s & - & 14.9s & -  \\
\midrule
\textbf{\# Resource Limit} & 9.96\% & 27.06\% & 0\% & 86.17\% & 3.8\% & 19.24\% & 0\% & 100\% & 38.34 \% \\
\midrule
\textbf{Total Runtime} & 68.58h & 168.62h & 0h & 620.18h & 0.4h & 1.19h & 0h & 6.79h & 1.15h \\
\bottomrule
\end{tabular}
\label{tab:results}
\end{table}

\section{Related Work}
\label{sec:rel-work}

A number of test case generation techniques are based on model
checking; a survey is provided in \cite{Fraser2009}. Many of
these techniques (such as \cite{Visser2004,Rayadurgam2001,Hong2002})
differ in abstraction levels and/or coverage goals from our approach.

Model checking based mutation testing using trap properties is presented in \cite{Gargantini1999}.
Trap properties are conditions that, if satisfied, indicate a killed mutant.
In contrast, our approach directly targets the input / output behavior of the model
and does not require to formulate model specific trap properties. 

Mutation based test case generation via module checking is proposed in \cite{Boroday2007}.
The theoretical framework of this work is similar to ours,
but builds on module checking instead of hyperproperties.
Moreover, no experimental evaluation is given in this work. 

The authors of \cite{aichernig_refinement_2014} present mutation killing using SMT solving.
In this work, the model, as well as killing conditions, are encoded into a SMT formula and solved using specialized algorithms.
Similarly, the MuAlloy \cite{wang2018mualloy} framework enables model-based mutation testing for Alloy models using SAT solving.
In this work, the model, as well as killing conditions, are encoded into a SAT formula and solved using the Alloy framework.
In contrast to these approaches, we encode only the killing conditions into a formula.
This allows us to directly use model checking techniques, in contrast to SAT or SMT solving.
Therefore, our approach is more flexible and more likely to be applicable in other domains. 
We demonstrate this by producing test cases for 
models encoded in two different modeling languages. 

Symbolic methods for weak mutation coverage are proposed in \cite{DBLP:conf/icst/BardinKC14} and \cite{DBLP:conf/icst/BardinDDKPTM15}.
The former work describes the use of dynamic symbolic execution for weakly killing mutants.
The latter work describes a sound and incomplete method for detecting equivalent weak mutants.
The considered coverage criterion in both works is weak mutation, which, 
unlike the strong mutation coverage criterion considered in this work,
can be encoded as a classic safety property.
However, both methods could be used in conjunction with our method.
Dynamic symbolic execution could be used to first weakly kill mutants and thereafter strongly kill them via hyperproperty model checking.
Equivalent weak mutants can be detected with the methods of \cite{DBLP:conf/icst/BardinDDKPTM15}
to prune the candidate space of potentially strongly killable mutants for hyperpropery model checking.

A unified framework for defining multiple coverage criteria, 
including weak mutation and hyperproperties such as unique-cause MCDC, 
is proposed in \cite{DBLP:conf/icst/MarcozziDBKP17} .
While strong mutation is not expressible in this framework, 
applying hyperproperty model checking to the proposed framework is interesting future work.
\vspace{-.2cm}
\section{Conclusion}
\label{sec:conclusion}
Our formalization of mutation testing in terms of hyperproperties
enables the automated model-based generation of tests using an off-the-shelf
model checker. In particular, we study killing of mutants
in the presence of non-determinism, where test-case generation
is enabled by a transformation that makes non-determinism in models explicit
and controllable. We evaluated
our approach on publicly available SMV and Verilog models,
and will extend our evaluation to more modeling languages and models
in future work.



\clearpage

\bibliographystyle{plain}
\bibliography{paper}{}

\begin{thebibliography}{10}

\bibitem{benchmarks_git}
Mutation testing with hyperproperies benchmark models.
\newblock
  \url{https://git-service.ait.ac.at/sct-dse-public/mutation-testing-with-hyperproperties}.
\newblock Uploaded: 2019-04-25.

\bibitem{krenn_MoMuTUML_2015}
B.~Aichernig, H.~Brandl, E.~J{\"o}bstl, W.~Krenn, R.~Schlick, and S.~Tiran.
\newblock {{MoMuT}}::{{UML}} model-based mutation testing for {{UML}}.
\newblock In {\em Software Testing, Verification and Validation (ICST), 2015
  IEEE 8th International Conference on}, ICST, pages 1--8, April 2015.

\bibitem{DBLP:journals/stvr/AichernigBJKST15}
Bernhard~K. Aichernig, Harald Brandl, Elisabeth J{\"{o}}bstl, Willibald Krenn,
  Rupert Schlick, and Stefan Tiran.
\newblock Killing strategies for model-based mutation testing.
\newblock {\em Softw. Test., Verif. Reliab.}, 25(8):716--748, 2015.

\bibitem{aichernig_refinement_2014}
Bernhard~K. Aichernig, Elisabeth J{\"o}bstl, and Stefan Tiran.
\newblock {{Model-based mutation testing via symbolic refinement checking}}.
\newblock 2014.

\bibitem{Arcaini2015}
Paolo Arcaini, Angelo Gargantini, and Elvinia Riccobene.
\newblock Using mutation to assess fault detection capability of model review.
\newblock {\em Softw. Test., Verif. Reliab.}, 25(5-7):629--652, 2015.

\bibitem{Arcaini2017}
Paolo Arcaini, Angelo Gargantini, and Elvinia Riccobene.
\newblock Nuseen: {A} tool framework for the nusmv model checker.
\newblock In {\em 2017 {IEEE} International Conference on Software Testing,
  Verification and Validation, {ICST} 2017, Tokyo, Japan, March 13-17, 2017},
  pages 476--483. {IEEE} Computer Society, 2017.

\bibitem{DBLP:conf/icst/BardinDDKPTM15}
S{\'{e}}bastien Bardin, Micka{\"{e}}l Delahaye, Robin David, Nikolai Kosmatov,
  Mike Papadakis, Yves~Le Traon, and Jean{-}Yves Marion.
\newblock Sound and quasi-complete detection of infeasible test requirements.
\newblock In {\em 8th {IEEE} International Conference on Software Testing,
  Verification and Validation, {ICST} 2015, Graz, Austria, April 13-17, 2015},
  pages 1--10, 2015.

\bibitem{DBLP:conf/icst/BardinKC14}
S{\'{e}}bastien Bardin, Nikolai Kosmatov, and Fran{\c{c}}ois Cheynier.
\newblock Efficient leveraging of symbolic execution to advanced coverage
  criteria.
\newblock In {\em Seventh {IEEE} International Conference on Software Testing,
  Verification and Validation, {ICST} 2014, March 31 2014-April 4, 2014,
  Cleveland, Ohio, {USA}}, pages 173--182, 2014.

\bibitem{Biere2011}
Armin Biere, Keijo Heljanko, and Siert Wieringa.
\newblock {AIGER} 1.9 and beyond, 2011.
\newblock Available at {\url{fmv.jku.at/hwmcc11/beyond1.pdf}}.

\bibitem{Boroday2007}
Sergiy Boroday, Alexandre Petrenko, and Roland Groz.
\newblock Can a model checker generate tests for non-deterministic systems?
\newblock {\em Electronic Notes in Theoretical Computer Science}, 190(2):3--19,
  2007.

\bibitem{Budd1979}
Timothy~A Budd, Richard~J Lipton, Richard~A DeMillo, and Frederick~G Sayward.
\newblock Mutation analysis.
\newblock Technical report, DTIC Document, 1979.

\bibitem{Cheng1993}
Szu-Tsung Cheng, Gary York, and Robert~K Brayton.
\newblock Vl2mv: A compiler from verilog to blif-mv.
\newblock {\em HSIS Distribution}, 1993.

\bibitem{Clarkson2014}
Michael~R. Clarkson, Bernd Finkbeiner, Masoud Koleini, Kristopher~K. Micinski,
  Markus~N. Rabe, and C{\'e}sar S{\'a}nchez.
\newblock {\em Temporal Logics for Hyperproperties}, pages 265--284.
\newblock Springer Berlin Heidelberg, Berlin, Heidelberg, 2014.

\bibitem{DBLP:journals/jcs/ClarksonS10}
Michael~R. Clarkson and Fred~B. Schneider.
\newblock Hyperproperties.
\newblock {\em Journal of Computer Security}, 18(6):1157--1210, 2010.

\bibitem{Fellner2017}
Andreas Fellner, Willibald Krenn, Rupert Schlick, Thorsten Tarrach, and Georg
  Weissenbacher.
\newblock Model-based, mutation-driven test case generation via
  heuristic-guided branching search.
\newblock In Jean{-}Pierre Talpin, Patricia Derler, and Klaus Schneider,
  editors, {\em Formal Methods and Models for System Design (MEMOCODE)}, pages
  56--66. ACM, 2017.

\bibitem{FinkbeinerHH18}
Bernd Finkbeiner, Christopher Hahn, and Tobias Hans.
\newblock Mghyper: Checking satisfiability of {HyperLTL} formulas beyond the
  {\(\exists^*\)} {\(\forall^*\)} fragment.
\newblock In Shuvendu~K. Lahiri and Chao Wang, editors, {\em Automated
  Technology for Verification and Analysis (ATVA)}, volume 11138 of {\em
  Lecture Notes in Computer Science}, pages 521--527. Springer, 2018.

\bibitem{Finkbeiner2015}
Bernd Finkbeiner, Markus~N. Rabe, and C{\'e}sar S{\'a}nchez.
\newblock Algorithms for model checking {HyperLTL} and {HyperCTL}$^*$.
\newblock In Daniel Kroening and Corina~S. P{\u{a}}s{\u{a}}reanu, editors, {\em
  Computer Aided Verification (CAV)}, Lecture Notes in Computer Science, pages
  30--48. Springer, 2015.

\bibitem{Fraser2009}
Gordon Fraser, Franz Wotawa, and Paul~E Ammann.
\newblock Testing with model checkers: a survey.
\newblock {\em Software Testing, Verification and Reliability}, 19(3):215--261,
  2009.

\bibitem{Gargantini1999}
Angelo Gargantini and Constance Heitmeyer.
\newblock Using model checking to generate tests from requirements
  specifications.
\newblock In {\em ACM SIGSOFT Software Engineering Notes}, volume~24, pages
  146--162. Springer-Verlag, 1999.

\bibitem{Hong2002}
Hyoung~Seok Hong, Insup Lee, Oleg Sokolsky, and Hasan Ural.
\newblock A temporal logic based theory of test coverage and generation.
\newblock In {\em International Conference on Tools and Algorithms for the
  Construction and Analysis of Systems}, pages 327--341. Springer, 2002.

\bibitem{Howden1982}
William~E. Howden.
\newblock Weak mutation testing and completeness of test sets.
\newblock {\em {IEEE} Trans. Software Eng.}, 8(4):371--379, 1982.

\bibitem{Lal2009}
Akash Lal and Thomas Reps.
\newblock Reducing concurrent analysis under a context bound to sequential
  analysis.
\newblock {\em Formal Methods in System Design}, 35(1):73--97, 2009.

\bibitem{DBLP:conf/icst/MarcozziDBKP17}
Micha{\"{e}}l Marcozzi, Micka{\"{e}}l Delahaye, S{\'{e}}bastien Bardin, Nikolai
  Kosmatov, and Virgile Prevosto.
\newblock Generic and effective specification of structural test objectives.
\newblock In {\em 2017 {IEEE} International Conference on Software Testing,
  Verification and Validation, {ICST} 2017, Tokyo, Japan, March 13-17, 2017},
  pages 436--441, 2017.

\bibitem{mcmillan92smv}
{McMillan, Kenneth L.}
\newblock The {SMV} system.
\newblock Technical Report CMU-CS-92-131, Carnegie Mellon University, 1992.

\bibitem{Nelson1989}
Greg Nelson.
\newblock A generalization of dijkstra's calculus.
\newblock {\em ACM Transactions on Programming Languages and Systems (TOPLAS)},
  11(4):517--561, October 1989.

\bibitem{Offutt1992}
A.~Jefferson Offutt.
\newblock Investigations of the software testing coupling effect.
\newblock {\em {ACM} Trans. Softw. Eng. Methodol.}, 1(1):5--20, 1992.

\bibitem{Rayadurgam2001}
Sanjai Rayadurgam and Mats Per~Erik Heimdahl.
\newblock Coverage based test-case generation using model checkers.
\newblock In {\em Engineering of Computer Based Systems (ECBS)}, pages 83--91.
  IEEE, 2001.

\bibitem{Tretmans1996}
Jan Tretmans.
\newblock Test generation with inputs, outputs and repetitive quiescence.
\newblock {\em Software - Concepts and Tools}, 17(3):103--120, 1996.

\bibitem{Visser2004}
Willem Visser, Corina~S Pǎsǎreanu, and Sarfraz Khurshid.
\newblock Test input generation with java pathfinder.
\newblock {\em ACM SIGSOFT Software Engineering Notes}, 29(4):97--107, 2004.

\bibitem{wang2018mualloy}
Kaiyuan Wang, Allison Sullivan, and Sarfraz Khurshid.
\newblock Mualloy: a mutation testing framework for alloy.
\newblock In {\em International Conference on Software Engineering: Companion
  (ICSE-Companion)}, pages 29--32. IEEE, 2018.

\end{thebibliography}

\clearpage

\appendix

\newpage
\section{Appendix}
\label{appendix:proofs}

\subsection{Verilog mutation operators}

\begin{table}[htb]
\centering
\caption{List of supported Verilog mutation operators ($^*$ marks bit-wise operations)}
\label{tab:mutop}

\begin{tabular}{p{2cm} p{7cm}}

\textbf{Type} & \textbf{Mutation} \\
\toprule
Arithmetic & Exchange binary $+$ and $-$  \\
           & Exchange unary  $+$ and $-$  \\
Relations  & Exchange $==$ and $!=$ \\
           & Exchange $<$, $\leq$, $>$, $\geq$  \\
Boolean    & Exchange $!$ and $\sim^*$  \\
           & Drop $!$ and $\sim^*$       \\
           & Exchange $\&\&$,$||$,$\&^*$,$|^*$, $XOR$ and $XNOR$ \\
Assignments & Exchange $=$ and $<=$  \\
            & (Blocking \& Non-Blocking Assignment)  \\
Constants & Replace Integer Constant $c$ by $0,1,c+1$, and $c-1$  \\
          & Replace Bit-Vector Constant by $\vec{0}$, and $\vec{1}$  \\
\bottomrule

\end{tabular}

\end{table}

\subsection{Mixed determinism cases}

\begin{lemma}
\label{lemma:satisfyparts}
Let $\Pi$ be a trace assignment with $p \defn \Pi(\pi)$, $q \defn \Pi(\pi')$, and $\STS^{c(m)}$ a conditional mutant.
\begin{enumerate}

\item $\Pi \models_{\STS^{c(m)}} \Box (\neg mut_{\pi})$ then $p\restrict{\InVars \cup \OutVars \cup \StateVars} \in \traces(\STS)$
\item $\Pi \models_{\STS^{c(m)}} \Box (mut_{\pi})$ then $p\restrict{\InVars \cup \OutVars \cup \StateVars} \in \traces(\STS^m)$
\item $\Pi \models_{\STS^{c(m)}} \Box (\bigwedge_{i \in \AP_{\InVars}} i_{\pi} \leftrightarrow i_{\pi'}))$ then $p\restrict{\InVars} = q\restrict{\InVars}$
\item $\Pi \models_{\STS^{c(m)}} \lozenge (\bigvee_{o \in \AP_{\OutVars}} \neg ( o_{\pi} \leftrightarrow o_{\pi'} ))$ then $p\restrict{\OutVars} \neq q\restrict{\OutVars}$

\end{enumerate}

\end{lemma}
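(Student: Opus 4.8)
The plan is to prove each of the four implications separately, since they are essentially unfoldings of the HyperLTL semantics from Section~\ref{subsec:hyperltl} combined with the structural facts about conditional mutants established just above Example~\ref{example:sts}. In each case I would start from the semantic definition of $\Pi \models_{\STS^{c(m)}} \cdot$ applied to the given temporal formula, recall that $p = \AP(\hat p)$ and $q = \AP(\hat q)$ for actual traces $\hat p, \hat q \in \traces(\STS^{c(m)})$ (this is what it means for $\Pi$ to map trace variables into $\APtraces(\STS^{c(m)})$), and then read off the stated conclusion.

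For item~1, $\Pi \models_{\STS^{c(m)}} \Box(\neg mut_\pi)$ unfolds (via the semantics of $\Box$ and of atomic propositions $a_\pi$) to: for all $j \geq 0$, $[\neg mut] \in \Pi(\pi)[j]$, i.e. the $mut$ component of $p$ is $\bot$ at every position, so $p\restrict{\{\mutvar\}} = \{\bot\}^\omega$. Then I invoke the displayed consequence stated in the paper right after the definition of $\delta^{c(m)}$: if $p\restrict{\{\mutvar\}} = \{\bot\}^\omega$ then $p\restrict{\InVars \cup \OutVars \cup \StateVars} \in \traces(\STS)$. Item~2 is identical with $\top$ in place of $\bot$ and $\STS^m$ in place of $\STS$. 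Item~3: $\Pi \models_{\STS^{c(m)}} \Box(\bigwedge_{i \in \AP_{\InVars}} i_\pi \leftrightarrow i_{\pi'})$ means that for every position $j$ and every input proposition $i$, $i \in p[j] \iff i \in q[j]$; since the input propositions $\AP_{\InVars}$ are exactly what determines $p\restrict{\InVars}$ at each step, this gives $p\restrict{\InVars} = q\restrict{\InVars}$ (here I would lean on the assumption, implicit in the $\AP$ machinery, that the input labeling is injective enough to recover the restricted trace — i.e. $\AP_{\InVars}$ characterizes inputs). Item~4: $\Pi \models_{\STS^{c(m)}} \lozenge(\bigvee_{o \in \AP_{\OutVars}} \neg(o_\pi \leftrightarrow o_{\pi'}))$ unfolds to: there exists a position $j$ and an output proposition $o$ with $o \in p[j] \not\iff o \in q[j]$, hence $p[j]$ and $q[j]$ differ on some output proposition, so $p\restrict{\OutVars} \neq q\restrict{\OutVars}$.

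The only genuinely delicate point — and the step I would flag as the main obstacle — is the passage from "the atomic-proposition traces agree/disagree on the relevant propositions" to "the restricted concrete traces agree/disagree", i.e. items~3 and~4. This requires that the atomic propositions in $\AP_{\InVars}$ (resp. $\AP_{\OutVars}$) are rich enough to distinguish distinct input (resp. output) valuations; in the finite-domain setting the paper mentions (footnote on binary encodings), this holds, and I would state it as the working assumption, citing the remark that $\AP$ serves as "labels that characterize inputs, outputs, and states". Everything else is a routine unfolding of the $\Box$, $\lozenge$, and $a_\pi$ clauses of the HyperLTL semantics together with the already-established correspondence between the $\mutvar$-value of a trace of $\STS^{c(m)}$ and membership in $\traces(\STS)$ versus $\traces(\STS^m)$. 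I would present the four parts in the order 1, 2, 3, 4, each as a short paragraph, with items~1 and~2 sharing the bulk of their argument.
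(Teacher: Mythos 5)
Your proposal is correct and follows essentially the same route as the paper's own (very terse) proof: items~1 and~2 are reduced to the displayed consequence of the conditional-mutant construction relating the value of $\mutvar$ to membership in $\traces(\STS)$ versus $\traces(\STS^m)$, and items~3 and~4 to the assumption that $\AP_{\InVars}$ and $\AP_{\OutVars}$ uniquely characterize inputs and outputs. The delicate point you flag about the atomic propositions being rich enough to recover the restricted traces is exactly the assumption the paper invokes, so there is no gap.
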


\begin{proof}

The first two statements follow directly from the definition of conditional mutants.
The latter two statements follow directly from the fact that $\AP_{\InVars},\AP_{\OutVars}$ uniquely characterize inputs and outputs.

\end{proof}

\begin{restatable}{proposition}{dndpotential}
\label{thm:det_ndet_potential}
	Let the model $\STS$ with inputs $\InVars$  and outputs $\OutVars$  be deterministic and the mutant $\STS^m$ be non-deterministic.
	$$\STS^{c(m)} \models \phi_1(\InVars,\OutVars)\text{ iff } \STS^m\text{ is potentially killable.}$$
	
	\noindent Let $p$ be a $\pi$-witness for $\STS^{c(m)} \models \phi_1(\InVars,\OutVars)$, 
	then there is $n \in \mathbb{N}$, such that the test $t := p[0,n]\restrict{\InVars \cup \OutVars}$ potentially kills $\STS^m$.
\end{restatable}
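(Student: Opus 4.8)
The plan is to prove the two directions of the ``iff'' separately, exploiting the fact that $\phi_1$ is an $\exists\exists$ formula and that $\STS$ (the original model) is deterministic, while allowing $\STS^m$ to be non-deterministic. Throughout I would use Lemma~\ref{lemma:satisfyparts} as the bridge between satisfaction of sub-formulas of $\phi_1$ and the trace-level statements in Definition~\ref{def:potentiallykillable}. Recall $\phi_1$ asks for traces $\pi,\pi'$ of $\STS^{c(m)}$ with $\Box\neg\mutvar_\pi$, $\Box\mutvar_{\pi'}$, globally equal inputs, and eventually differing outputs.

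For the ``only if'' direction, suppose $\STS^{c(m)} \models \phi_1(\InVars,\OutVars)$ with $\pi$-witness $p$ and $\pi'$-witness $q$. By items~1--4 of Lemma~\ref{lemma:satisfyparts}, $p\restrict{\InVars\cup\OutVars\cup\StateVars} \in \traces(\STS)$, $q\restrict{\InVars\cup\OutVars\cup\StateVars} \in \traces(\STS^m)$, $p\restrict{\InVars} = q\restrict{\InVars}$, and there is a position $n$ with $p\restrict{\OutVars}[n] \neq q\restrict{\OutVars}[n]$. Then $q[0,n]\restrict{\InVars\cup\OutVars}$ witnesses that $q$'s input/output prefix has no counterpart among the traces of $\STS$ with the same inputs: any $\STS$-trace agreeing with $q$ on inputs up to $n$ must, by \emph{determinism} of $\STS$, coincide with $p$ on outputs up to $n$ (here is where determinism of the original model is essential — it makes the output sequence a function of the input sequence), and $p$ already disagrees with $q$ at position $n$. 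Hence $\{q'[0,n]\mid q'\in\traces(\STS^m),\, q'[0,n]\restrict{\InVars}=q[0,n]\restrict{\InVars}\}\restrict{\InVars\cup\OutVars} \nsubseteq \{p'[0,n]\mid p'\in\traces(\STS)\}\restrict{\InVars\cup\OutVars}$, which is exactly the condition that $t := q[0,n]\restrict{\InVars\cup\OutVars}$ potentially kills $\STS^m$ (and a fortiori that $\STS^m$ is potentially killable). Note that the statement in the proposition refers to the $\pi$-witness $p$, so I would additionally remark that since $p\restrict{\InVars}=q\restrict{\InVars}$, the test $p[0,n]\restrict{\InVars\cup\OutVars}$ has the same inputs as $t$ and the same outputs as the original model's response, so it likewise potentially kills $\STS^m$ — the deviating mutant behavior $q$ realizes the non-inclusion.

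For the ``if'' direction, suppose $\STS^m$ is potentially killable, i.e.\ $\traces(\STS^m)\restrict{\InVars\cup\OutVars} \nsubseteq \traces(\STS)\restrict{\InVars\cup\OutVars}$. Pick $q \in \traces(\STS^m)$ whose input/output projection is not the projection of any $\STS$-trace. Since $\STS^m$ is input-enabled like $\STS$, there is $p \in \traces(\STS)$ with $p\restrict{\InVars} = q\restrict{\InVars}$; by determinism of $\STS$ this $p$ is unique on its output component. Because $q\restrict{\InVars\cup\OutVars}$ is not an $\STS$-trace projection, $p$ and $q$ must differ in outputs at some position. Lift $p$ and $q$ to traces of $\STS^{c(m)}$ by setting $\mutvar$ to $\bot$ along $p$ and to $\top$ along $q$ (legitimate by the construction of the conditional mutant: $\bot$-traces of $\STS^{c(m)}$ are exactly $\traces(\STS)$ and $\top$-traces are exactly $\traces(\STS^m)$, both restricted appropriately). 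Then $\Pi[\pi\mapsto\AP(p), \pi'\mapsto\AP(q)]$ satisfies all conjuncts of the matrix of $\phi_1$, so $\STS^{c(m)}\models\phi_1(\InVars,\OutVars)$, with $p$ a $\pi$-witness; the prefix bound $n$ is the first output-disagreement position, and the argument of the previous paragraph shows $p[0,n]\restrict{\InVars\cup\OutVars}$ potentially kills $\STS^m$.

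The main obstacle I anticipate is the careful handling of the asymmetry between potential killing of the \emph{mutant} (which requires exhibiting a mutant behavior with no matching model behavior) and the fact that $\phi_1$ is phrased with an existential on both traces: one must check that a single pair $(p,q)$ simultaneously witnesses $\phi_1$ and yields a test in the sense of Definition~\ref{def:potentiallykillable}, and in particular that determinism of $\STS$ is what makes ``the model's output on these inputs is $p$'s output'' a well-defined and complete statement (otherwise one could not conclude non-inclusion from a single witnessing model trace). A secondary, routine point is translating between infinite-trace statements ($\traces$, $\nsubseteq$ on full projections) and the finite-prefix formulation ($p[0,n]$), which just uses that output disagreement occurs at a finite position and that finite input prefixes of $\STS$-traces extend to full traces by input-enabledness.
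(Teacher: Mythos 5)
Your proposal is correct and follows essentially the same route as the paper's proof: use input-enabledness plus determinism of $\STS$ to pair the deviating mutant trace with the unique model trace on the same inputs for one direction, and Lemma~\ref{lemma:satisfyparts} to extract equal inputs and eventually-differing outputs from the two existential witnesses for the other, with the test prefix taken at the first output disagreement. If anything, you are more explicit than the paper about the one place determinism is genuinely needed (a single differing model trace only establishes non-inclusion because it is the \emph{only} model trace on those inputs), a step the paper passes over with ``clearly.''
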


\begin{proof}

Assume $\STS^m$ is potentially killable.
Let $q \in \traces(\STS^m)$, such that $q\restrict{I\cup O} \notin \traces(\STS)\restrict{I\cup O}$.
Since $\STS^m$ is equally input-enabled, 
there exists a trace $p \in \traces(\STS)$, such that $p\restrict{\InVars} = q\restrict{\InVars}$.
Clearly, $p\restrict{\OutVars} \neq q\restrict{\OutVars}$.
Therefore, $p$ and $q$ are satisfying assignments for $\Phi_1(\InVars,\OutVars)$ and $\pi$, $\pi'$ respectively.

Assume $\STS^{c(m)} \models \Phi_1(\InVars,\OutVars)$.
Let $p$ be a $\pi$-witness of $\Phi_1(\InVars,\OutVars)$ and 
let $q \in \traces(\STS^{c(m)})$ be a $\pi'$-witness of $\Phi_1(\InVars,\OutVars)$ without the first existential quantifier.
From Lemma~\ref{lemma:satisfyparts}, we immediately get $p\restrict{\InVars} = q\restrict{\InVars}$, and $p\restrict{\OutVars} \neq q\restrict{\OutVars}$.
This shows $\traces(\STS^m)\restrict{I\cup O} \nsubseteq \traces(\STS)\restrict{I\cup O}$.

Since $p\restrict{\OutVars} \neq q\restrict{\OutVars}$, there exists a smallest $n \in \mathbb{N}$ such that $p[0,n-1]\restrict{\OutVars} = q[0,n-1]\restrict{\OutVars}$ and $p[n]\restrict{\OutVars} \neq q[n]\restrict{\OutVars}$.
Clearly, $p[0,n]\restrict{I \cup O}$ potentially kills $\STS^m$.

\end{proof}

The following hyperproperty expresses definite killing for deterministic models and non-deterministic mutants:

\begin{equation}
	\label{eq:det_definite_kill}
	 \phi_4(\InVars,\OutVars) := \exists \pi \forall \pi' \Box (\neg mut_{\pi} \wedge mut_{\pi'}
	\bigwedge_{i \in \AP_{\InVars}} i_{\pi} \leftrightarrow i_{\pi'}) \rightarrow \lozenge (\bigvee_{o \in \AP_{\OutVars}} \neg ( o_{\pi} \leftrightarrow o_{\pi'} ))
\end{equation}

\begin{restatable}{proposition}{dnddefinite}
\label{thm:det_ndet_definite}
	Let the model $\STS$ with inputs $\InVars$  and outputs $\OutVars$  be deterministic and the mutant $\STS^m$ be non-deterministic.
	$$\STS^{c(m)} \models \phi_4(\InVars,\OutVars)\text{ iff } \STS^m \text{ is definitely killable.}$$
	
		\noindent If $t$ is a $\pi$-witness for $\STS^{c(m)} \models \phi_4(\InVars,\OutVars)$, then
 $t[0,n]\restrict{\InVars \cup \OutVars}$ definitely kills $\STS^m$ (for some $n\in\mathbb{N}$).
\end{restatable}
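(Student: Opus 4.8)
The plan is to prove the statement as a direct adaptation of the deterministic case (Proposition~\ref{thm:det}) together with the mixed-determinism bookkeeping from Lemma~\ref{lemma:satisfyparts}. The key observation is that $\phi_4$ differs from $\phi_1$ only in that the mutant trace is now quantified universally ($\forall\pi'$) rather than existentially, which is exactly the syntactic shift that turns "potential" into "definite" killing when the model side is deterministic: a single deterministic trace of $\STS$ with a given input sequence is compared against \emph{every} trace of the mutant with that input sequence. So the proof splits into the two implications of the "iff", plus the witness-extraction remark.

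First, for the ($\Leftarrow$) direction, assume $\STS^m$ is definitely killable, so there is $\vec I \in \traces(\STS)\restrict{\InVars}$ with $\{q \in \traces(\STS^m) \mid q\restrict{\InVars} = \vec I\}\restrict{\OutVars} \cap \{p \in \traces(\STS) \mid p\restrict{\InVars} = \vec I\}\restrict{\OutVars} = \emptyset$. Since $\STS$ is deterministic and equally input-enabled, there is exactly one trace $p$ of $\STS$ with $p\restrict{\InVars} = \vec I$; lift it to the trace $\hat p$ of $\STS^{c(m)}$ with $\mutvar$ globally $\bot$. Take this $\hat p$ as the $\pi$-witness. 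For any $\pi'$-trace $q$ of $\STS^{c(m)}$: if $q$ does not satisfy the antecedent ($\Box\mutvar_{\pi'}$ and globally-equal inputs) the implication holds vacuously; otherwise $q\restrict{\InVars\cup\OutVars\cup\StateVars} \in \traces(\STS^m)$ by Lemma~\ref{lemma:satisfyparts} and $q\restrict{\InVars} = \vec I$, so by the emptiness of the intersection $q\restrict{\OutVars} \neq p\restrict{\OutVars}$, which (again because $\AP_{\OutVars}$ characterizes outputs uniquely) gives $\lozenge\bigvee_{o\in\AP_{\OutVars}}\neg(o_\pi \leftrightarrow o_{\pi'})$. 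Hence $\STS^{c(m)} \models \phi_4$.

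For ($\Rightarrow$), assume $\STS^{c(m)}\models\phi_4$ and let $p$ be a $\pi$-witness. By Lemma~\ref{lemma:satisfyparts}(1), $p$ restricted to the original variables is a trace of $\STS$; set $\vec I \defn p\restrict{\InVars}$. I must show the intersection in Definition~\ref{def:definitelykillable} is empty. Suppose some output sequence lies in both sets, i.e.\ there is $r \in \traces(\STS^m)$ and $r' \in \traces(\STS)$ with $r\restrict{\InVars} = r'\restrict{\InVars} = \vec I$ and $r\restrict{\OutVars} = r'\restrict{\OutVars}$. Determinism of $\STS$ forces $r'\restrict{\OutVars} = p\restrict{\OutVars}$, so $r\restrict{\OutVars} = p\restrict{\OutVars}$ as well. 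Lift $r$ to the $\STS^{c(m)}$-trace $\hat r$ with $\mutvar$ globally $\top$; then $\hat r$ together with $p$ satisfies the antecedent of $\phi_4$ (same inputs, $\neg\mutvar$ on $\pi$, $\mutvar$ on $\pi'$), but $p\restrict{\OutVars} = \hat r\restrict{\OutVars}$ contradicts the consequent $\lozenge\bigvee_o \neg(o_\pi \leftrightarrow o_{\pi'})$. So no such pair exists, and $\STS^m$ is definitely killable. Finally, the witness remark: given the $\pi$-witness $t$, the consequent holds against the globally-$\top$ lift of \emph{every} mutant trace with inputs $t\restrict{\InVars}$; each such trace diverges at some finite position, and since (for a fixed finite-state model) there are finitely many relevant output prefixes, one can pick $n$ uniformly so that $t[0,n]\restrict{\InVars\cup\OutVars}$ already witnesses divergence from all of them, hence definitely kills $\STS^m$.

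The main obstacle is the last point — extracting a \emph{single finite} $n$ that works against \emph{all} nondeterministic mutant traces simultaneously. For each individual mutant trace divergence happens at some finite step, but a priori these steps are unbounded over the (infinitely many) traces; the argument needs the finite-state assumption (so that "output prefix not matching $p$" is a safety property over a finite system, giving a uniform bound), mirroring how the analogous $n$ is obtained in the proofs of Propositions~\ref{thm:ndet_ndet_potential} and~\ref{thm:ndet_ndet_definite}. Everything else is a routine transcription of Lemma~\ref{lemma:satisfyparts} and the determinism hypothesis.
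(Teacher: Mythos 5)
Your proof is correct and takes essentially the same route as the paper's: the unique deterministic model trace with inputs $\vec I$ serves as the $\pi$-witness for ($\Leftarrow$), and for ($\Rightarrow$) the universal quantification over $\pi'$ together with Lemma~\ref{lemma:satisfyparts} and determinism of $\STS$ yields exactly the emptiness of the output-set intersection in Definition~\ref{def:definitelykillable} (your contrapositive phrasing versus the paper's direct one is immaterial). The only divergence is the witness-extraction step, where you are in fact more careful than the paper: the paper waves at the argument of Proposition~\ref{thm:det_ndet_potential}, which produces an $n$ for a \emph{single} diverging trace, whereas obtaining one $n$ that works against \emph{all} mutant continuations genuinely needs the finite-branching/K\"onig-style argument you sketch, so flagging that as the real obstacle is a point in your favour rather than a gap.
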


\begin{proof}

Assume that $\STS^m$ is definitely killable.
Since $\STS$ is deterministic, for every input sequence, there is at most one trace with in $\traces(\STS)$ with this input sequence.
Therefore, there is an input sequence $\vec{I}$ and a unique trace $p \in \traces(\STS)$ with $p_I = \vec{I}$ and $p\restrict{I\cup O} \notin \traces(\STS^m)\restrict{I \cup O}$.
Any trace assignment that maps $\pi$ to $p$ satisfies $\phi_4(\InVars,\OutVars)$, 
since either the antecedent is violated by a trace $q \in \traces(\STS^m)$ assigned to $\pi'$ with different inputs,
or the consequent is violated by a trace $q \in \traces(\STS^m)$ assigned to $\pi'$ with inputs $\vec{I}$ and 
outputs that can only be different to $p\restrict{\OutVars}$.

Conversely, assume $\STS^{c(m)} \models \phi_4(\InVars,\OutVars)$.
Let $q$ be a $\pi$-witness and $p$ be a $\pi'$-witness for which the antecedent is satisfied,
which is in fact satisfiable, since $\STS^m$ is equally input-enabled.
Then, from Lemma~\ref{lemma:satisfyparts}, we get
$p\restrict{\InVars \cup \OutVars \cup \StateVars} \in \traces(\STS)$,
$q\restrict{\InVars \cup \OutVars \cup \StateVars} \in \traces(\STS^m)$,
and $q\restrict{\InVars} = p\restrict{\InVars}$.
Since $\Pi$ satisfies the whole formula, 
it must be the case that $\Pi$ also satisfies the consequent, i.e. $q\restrict{\OutVars} \neq p\restrict{\OutVars}$ (Lemma~\ref{lemma:satisfyparts}).
Therefore, we can conclude $p\restrict{I\cup O} \notin \traces(\STS^m)\restrict{I \cup O}$, which, as argued above, 
is equivalent to definite killing in the deterministic model case.

The existence of the definitely killing test can be shown analogously as in the proof of Proposition~\ref{thm:det_ndet_potential}.

\end{proof}

\begin{restatable}{proposition}{nddkill}
\label{thm:ndet_det}
	Let the model $\STS$ with inputs $\InVars$  and outputs $\OutVars$  be non-deterministic and the mutant $\STS^m$ be deterministic
	$$\STS^{c(m)} \models \phi_2(\InVars,\OutVars) \text{ iff } \STS^m \text{ is killable}.$$
	
	\noindent Let $q$ be a $\pi$-witness for $\STS^{c(m)} \models \phi_2(\InVars,\OutVars)$, 
	then there is $n \in \mathbb{N}$, such that for any trace $p \in \traces(\STS)$ with $p\restrict{\InVars} = q\restrict{\InVars}$ the test $t := p[0,n]\restrict{\InVars \cup \OutVars}$ kills $\STS^m$.
\end{restatable}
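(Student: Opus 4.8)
The plan is to mirror the structure of the proof of Proposition~\ref{thm:det_ndet_potential} (the "mixed" case with a deterministic model and non-deterministic mutant), but with the roles of model and mutant exchanged, and to exploit the fact that when $\STS^m$ is deterministic, potential and definite killability coincide by Proposition~\ref{prop:killability}, so "$\STS^m$ is killable" is unambiguous. Throughout I would lean on Lemma~\ref{lemma:satisfyparts}, which translates the four syntactic conjuncts of $\phi_2$ into the corresponding semantic facts (a $\pi$-trace with $\Box\, mut_\pi$ lies in $\traces(\STS^m)$ after restriction, a $\pi'$-trace with $\Box\,\neg mut_\pi$ lies in $\traces(\STS)$, equal input propositions give equal input projections, and an eventual output-proposition mismatch gives distinct output projections).

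First, for the direction "$\STS^m$ killable $\Rightarrow \STS^{c(m)}\models\phi_2$": since $\STS^m$ is deterministic, killability means $\traces(\STS^m)\restrict{\InVars\cup\OutVars}\nsubseteq\traces(\STS)\restrict{\InVars\cup\OutVars}$, so pick $q\in\traces(\STS^m)$ with $q\restrict{\InVars\cup\OutVars}\notin\traces(\STS)\restrict{\InVars\cup\OutVars}$; map this (lifted into $\STS^{c(m)}$ with $mut$ set to $\top$) to $\pi$. Now take any $p$ assigned to $\pi'$. If $p$ does not have $\Box\,\neg mut$ or does not have globally equal inputs to $q$, the antecedent of $\phi_2$ fails and we are done; otherwise $p\restrict{\InVars\cup\OutVars\cup\StateVars}\in\traces(\STS)$ with $p\restrict{\InVars}=q\restrict{\InVars}$, and since $q\restrict{\InVars\cup\OutVars}\notin\traces(\STS)\restrict{\InVars\cup\OutVars}$ we must have $p\restrict{\OutVars}\neq q\restrict{\OutVars}$, which gives an eventual mismatch in some output proposition, satisfying the consequent. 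Hence $\phi_2$ holds. Conversely, for "$\STS^{c(m)}\models\phi_2 \Rightarrow \STS^m$ killable": let $q$ be a $\pi$-witness; by equal input-enabledness of $\STS$ and $\STS^m$ there is a trace of $\STS$ (lifted with $mut=\bot$) with the same inputs as $q$, so the antecedent is satisfiable, and for any such $\pi'$-trace $p$ the consequent forces $p\restrict{\OutVars}\neq q\restrict{\OutVars}$; as this holds for \emph{every} trace of $\STS$ with inputs $q\restrict{\InVars}$, we get $q\restrict{\InVars\cup\OutVars}\notin\traces(\STS)\restrict{\InVars\cup\OutVars}$, i.e.\ $\STS^m$ is potentially (hence, by determinism, definitely) killable.

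For the test-extraction clause, I would argue: given a $\pi$-witness $q$, fix any $p\in\traces(\STS)$ with $p\restrict{\InVars}=q\restrict{\InVars}$ (exists by equal input-enabledness); from the $\phi_2$ consequent, $p\restrict{\OutVars}\neq q\restrict{\OutVars}$, so there is a smallest $n$ with $p[n]\restrict{\OutVars}\neq q[n]\restrict{\OutVars}$, and then $t:=p[0,n]\restrict{\InVars\cup\OutVars}$ is a well-formed test for $\STS$ (it is a prefix of the projection of a real trace $p$) whose inputs, when fed to $\STS^m$, force a trace whose output at step $n$ differs from $t$'s — so no trace of $\STS$ extending $t$ is matched by $\STS^m$ and the killing condition of Definition~\ref{def:potentiallykillable} holds; I would note this works for \emph{any} such $p$, as the statement claims. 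The main obstacle I anticipate is purely bookkeeping rather than conceptual: being careful that the consequent $\lozenge(\bigvee_o \neg(o_\pi\leftrightarrow o_{\pi'}))$ is applied to the \emph{right} pair of traces (here $\pi=$ mutant, $\pi'=$ model, which is the reverse of $\phi_1$'s convention), and that the universal quantifier over $\pi'$ is genuinely exhausting all model traces with the given inputs — this is what upgrades a single output mismatch to the set-theoretic non-containment in Definition~\ref{def:potentiallykillable}, and it is exactly where the non-determinism of $\STS$ would otherwise bite.
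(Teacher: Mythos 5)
Your proof is correct and mathematically identical to what the paper relies on: the paper disposes of this proposition in two lines by citing Proposition~\ref{thm:ndet_ndet_potential} (equivalence of $\phi_2$ with potential killability) together with Proposition~\ref{prop:killability} (potential $=$ definite for a deterministic mutant), and your write-up simply inlines the proof of that cited proposition using the same ingredients (Lemma~\ref{lemma:satisfyparts}, equal input-enabledness to make the antecedent satisfiable, and the universal $\pi'$-quantifier to turn a single output mismatch into set-theoretic non-containment). One shared imprecision, not specific to you: the test-extraction clause asserts a single $n$ that works for \emph{all} model traces $p$ with the witness's inputs, whereas choosing the smallest $n$ from one particular $p$ (as you do, and as the paper's own deferred argument in Proposition~\ref{thm:det_ndet_potential} does) only yields a per-trace bound, so a uniform $n$ would require an additional finite-branching argument.
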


\begin{proof}

Potential killing directly follows from the more restricted case in Proposition~\ref{thm:ndet_ndet_potential}.
Since $\STS^m$ is deterministic, by Proposition~\ref{prop:killability} it is also definitely killable.

\end{proof}

\subsection{Proofs of propositions}

\killability*

\begin{proof}

Let $\STS^m$ be potentially killable, which implies 
that there is a trace $q \in \traces(\STS^m)$, such that there is no trace $p \in \traces(\STS)$
with $q\restrict{I \cup O} = p\restrict{I \cup O}$, which implies $\traces(\STS^m)\restrict{I\cup O} \nsubseteq \traces(\STS)\restrict{I\cup O}$.

Let $\STS$ be deterministic and $\STS^m$ be potentially killable.
From the definition of determinism, it follows that for traces $q,q' \in \traces(\STS^m)$ with
$q_I = q'_I$ it is the case that $q = q'$.
In other words, for every sequence of inputs $\vec{I}$, 
we have $|\{q \in \traces(\STS^m) \mid q\restrict{I} = \vec{I}\}\restrict{\OutVars}| \leq 1$.
From potential killability (i.e. $\traces(\STS^m)\restrict{I\cup O} \nsubseteq \traces(\STS)\restrict{I\cup O}$) it follows that there exists $q \in \traces(\STS^m)$, such that
$q\restrict{\OutVars} \notin \{p \in \traces(\STS^m) \mid p\restrict{I} = q\restrict{I}\}\restrict{\OutVars}$.
Since the set of traces in the mutant sharing inputs with $q$ is a singleton, we have shown $\{q' \in \traces(\STS^m) \mid q'\restrict{I} = q\restrict{I}\}\restrict{\OutVars} \cap \{p \in \traces(\STS) \mid p\restrict{I} = q\restrict{I}\}\restrict{\OutVars} = \emptyset$, i.e. definite killability.
%
%
%
%

\end{proof}

\dkill*

\begin{proof}

Potential killing directly follows from the more restricted case in Proposition~\ref{thm:det_ndet_potential}.
Since $\STS^m$ is deterministic, by Proposition~\ref{prop:killability} it is also definitely killable.

\end{proof}

\ndpotential*

\begin{proof}

Assume that $\STS^m$ is potentially killable, which implies 
that there is a trace $q \in \traces(\STS^m)$, such that there is no trace $p \in \traces(\STS)$
with $q\restrict{I \cup O} = p\restrict{I \cup O}$.
Any trace assignment that maps $\pi$ to $p$ satisfies $\phi_2(\InVars,\OutVars)$, 
since either the antecedent is violated by a trace $q \in \traces(\STS^m)$ assigned to $\pi'$ with different inputs,
or the consequent is violated by a trace $q \in \traces(\STS^m)$ assigned to $\pi'$ with inputs $\vec{I}$ and 
outputs that can only be different to $p\restrict{\OutVars}$.

Conversely, assume $\STS^{c(m)} \models \phi_2(\InVars,\OutVars)$.
Let $p$ be a $\pi$-witness and $q$ be a $\pi'$-witness for which the antecedent is satisfied,
which is in fact satisfiable, since $\STS^m$ is equally input-enabled.
Then, from Lemma~\ref{lemma:satisfyparts}, we get
$p\restrict{\InVars \cup \OutVars \cup \StateVars} \in \traces(\STS)$,
$q\restrict{\InVars \cup \OutVars \cup \StateVars} \in \traces(\STS^m)$,
and $q\restrict{\InVars} = p\restrict{\InVars}$.
Since $\Pi[\pi \mapsto p, \pi' \mapsto q]$ satisfies the antecedent and the whole formula, 
the trace assignment also satisfies the consequent, i.e. $q\restrict{\OutVars} \neq p\restrict{\OutVars}$ (Lemma~\ref{lemma:satisfyparts}).
Since $q$ was chosen arbitrary (besides satisfying the antecedent), 
we can conclude $p\restrict{I\cup O} \notin \traces(\STS^m)\restrict{I \cup O}$, i.e. $\STS^m$ is potentially killable.

The existence of the potentially killing test can be shown analogously as in the proof of Proposition~\ref{thm:det_ndet_potential}.

\end{proof}

\nddefinite*

\begin{proof}

Let $\STS^m$ be definitely killable, which implies 
that there is a sequence of inputs $\vec{I} \in \traces(\STS)\restrict{\InVars}$, such that for $P_{\vec{I}} \defn \{p \in \traces(\STS) \mid p\restrict{\InVars} = \vec{I}\}$ and $Q_{\vec{I}} \defn \{q \in \traces(\STS) \mid q\restrict{\InVars} = \vec{I}\}$ it is the case that $P_{\vec{I}}\restrict{\OutVars} \cap Q_{\vec{I}}\restrict{\OutVars} = \emptyset$.
Since $\vec{I}$ is the input sequence of a trace of $\STS$, we have that $P_{\vec{I}} \neq \emptyset$.
Since $\STS^m$ is equally 
input-enabled, we have $Q_{\vec{I}} \neq \emptyset$.
We show that any $p \in P_{\vec{I}}$ is a $\pi$-witness to $\STS^{c(m)} \models \phi_3(\InVars,\OutVars)$.
Let $q' \in \traces(\STS^m)$ and $p'' \in \traces(\STS)$ be arbitrary traces and
consider a trace assignment that maps $\pi$ to $p$, $\pi'$ to $q'$ and $\pi''$ to $p''$
and assume that it satisfies the antecedent (which is satisfiable, due to $P_{\vec{I}} \neq \emptyset$ and $Q_{\vec{I}} \neq \emptyset$).
That is, $q' \in Q_{\vec{I}}$ and $p'' \in P_{\vec{I}}$.
Since $P_{\vec{I}}\restrict{\OutVars} \cap Q_{\vec{I}}\restrict{\OutVars} = \emptyset$,
we have shown $q'\restrict{\OutVars} \neq p''\restrict{\OutVars}$.
Since $q'$ and $p''$ were chosen arbitrarily, we have shown the trace assignment satisfies the formula,
i.e. $\STS^{c(m)} \phi_3(\InVars,\OutVars)$.

Conversely, assume $\STS^{c(m)} \models \phi_3(\InVars,\OutVars)$.
Let $p$ be a $\pi$-witness, $q'$ be a $\pi'$-witness, and $p''$ be a $\pi'$-witness for which the antecedent is satisfied,
which is in fact satisfiable, since $\STS^m$ is equally input-enabled.
Then, from Lemma~\ref{lemma:satisfyparts}, we get
$p\restrict{\InVars \cup \OutVars \cup \StateVars},p''\restrict{\InVars \cup \OutVars \cup \StateVars} \in \traces(\STS)$,
$q'\restrict{\InVars \cup \OutVars \cup \StateVars} \in \traces(\STS^m)$,
and $p\restrict{\InVars} = q'\restrict{\InVars} = p''\restrict{\InVars}$.

Since the $\Pi[\pi \mapsto p, \pi' \mapsto q',\pi'' \mapsto p'']$ satisfies the whole formula and the antecedent,
the trace assignment must also satisfy the consequent.
That is, it must be the case that $q'\restrict{\OutVars} \neq p''\restrict{\OutVars}$ (Lemma~\ref{lemma:satisfyparts}).
Since $q'$ and $p''$ were chosen arbitrarily (besides satisfying the antecedent),
we have shown $\{q \in \traces(\STS^m) \mid q\restrict{\InVars} = p\restrict{\InVars}\}\restrict{\OutVars} \cap \{p'' \in \traces(\STS) \mid p''\restrict{\InVars} = p\restrict{\InVars}\}\restrict{\OutVars} = \emptyset$, i.e. $\vec{I} \defn p\restrict{\InVars}$ is the input sequence showing that $\STS^m$ is definitely killable.

The existence of the definitely killing test can be shown analogously as in the proof of Proposition~\ref{thm:det_ndet_potential}.

\end{proof}

\transformationproperties*
%

\begin{proof}

We show $D(\STS^{c(m)})$ is deterministic (up to $\mutvar$).
$D(\STS^{c(m)})$ has a unique (up to $\mutvar$) initial state $X^{\tau}$ and initial output $O_{\varepsilon}$, 
since we fix $D(\alpha^{c(m)})$ to be satisfiable by exactly this state and output.

Assume that there are state $X$, successor $X'$ (both with respect to $\StateVars \cup \{mut\} \cup \{x^{\tau}\}$), inputs $I$ (with respect to  $\InVars \cup \{nd\}$), and outputs $O_1,O_2$,
$X \trans{I,O_1} X'_1$, $X \trans{I,O_2} X'_2$ and either $O_1 \neq O_2$ or $X'_1\restrict{\StateVars} \neq X'_2 \restrict{\StateVars}$.
That is, there is real non-determinism remaining in $D(\STS^{c(m)})$.
First we note that $X'_1(x^{\tau}) = X'_2(x^{\tau}) = \bot$ due to the constraint we add to $D(\delta^{c(m)})$ for $x^{\tau}$,
so we can rule out difference in valuation of this variable.
Since output $O$ and successor state $X'$ are uniquely fixed with input $nd(O,I)$,
it needs to be the case that $X \trans{I\restrict{\InVars \setminus \{nd\}},O_1} X'_1\restrict{\StateVars_+}$, $X \trans{I\restrict{\InVars \setminus \{nd\}},O_2} X'_2\restrict{\StateVars_+}$.
This implies  that the transitions are also present in $\STS^{c(m)}$.
That is, $(X,I)$ causes non-determinism in $\STS^{c(m)}$.
Therefore, in the transformation of $\STS^{c(m)}$, additional constraints are added, 
tying outputs and successors of $(X,I)$ to unique values of $nd$,
which is a contradiction to the statement that the transitions exist irrespective to the value of $nd$.

We show $p \in \traces(\STS^{c(m)})\restrict{\StateVars_+ \cup \InVars \cup \OutVars}[0,n]$ then $p \in \traces(D(\STS^{c(m)}))[1,n+1]\restrict{\StateVars_+ \cup \InVars \cup \OutVars}$ by induction on $n$.

First note that sets $J^{\cap}$, $J^{orig}$, and $J^{mut}$ are pairwise disjoint and contain every initial state/output pair of $\STS^{c(m)}$,
since use the very definition of initial state/output pairs, possibly splitting them according to values of $mut$.
Likewise, for each $(X,I)$ sets $T^{\cap}_{(X,I)} \cup T^{mut}_{(X,I)} \cup T^{orig}_{(X,I)}$
are pairwise disjoint and contain every transition for $(X,I)$ as a tuple $(X,I,O,X')$.

In the base case, $n = 0$, let $p = (I_0,X_0,O_0)$, where $X_0$ and 
$O_0$ are initial state and output of $\STS^{c(m)}$.
We need to show that there is a trace $q \in \traces(D(\STS^{c(m)}))$, 
such that $q[1]\restrict{\StateVars_m \cup \InVars \cup \OutVars} = (I_0,X_0,O_0)$.
As noted above $(O_0,X_0) \in J^{\cap} \cup J^{orig} \cup J^{mut}$.
Therefore, we add a constraint corresponding to a transition 
$X^{\tau} \trans{nd(O_0,X_0),O_0} X'_0$ to the system.
Furthermore, for $(X_0,I_0)$, we add a constraint corresponding to a transition
$X_0 \trans{I_0,O} X'$ for some output $O$ and successor $X'$.
Therefore, the trace $q$ exists.

In the inductive step, assume that the statement holds for $n-1$ and consider the case for $n$.
Let $p[n-1] = (I_{n-1},X_{n-1},O_{n-1})$ and $p[n] = (I_{n},X_{n},O_{n})$.
We need to show that for some trace $q \in \traces(D(\STS^{c(m)}))$ 
with $q[n]\restrict{\StateVars_m \cup \InVars \cup \OutVars} = (I_{n-1},X_{n-1},O_{n-1})$, 
which exists due to the induction hypothesis, 
it is the case that $q\restrict{\StateVars_m \cup \InVars \cup \OutVars}[n+1] = (I_{n},X_{n},O_{n})$.
Again, this is a consequence of $(X_{n-1},I_{n-1},O_n,X_{n}') \in T^{\cap}_{(X_{n-1},I_{n-1})} \cup T^{mut}_{(X_{n-1},I_{n-1})} \cup T^{orig}_{(X_{n-1},I_{n-1})}$
and the transitions we add to the system, exhaustively constrained by values of $nd$ for output/successor pairs.

Note also that a consequence of the above statements, and the fact that we introduce transitions for different values of $nd$ exhaustively,
is that $D(\STS^{c(m)})$ preserves equal input-enabledness.

$\not\models \phi_1(\InVars,\OutVars)$ then $\STS^m$ is equivalent is a direct consequence of the statements about traces,
since $\not\models \phi_1(\InVars,\OutVars)$ shows no trace in $\traces(D(\STS^{c(m)}))$ is a witness to killing the mutant.
Since traces of $\STS^{c(m)}$ are included in (the projection of) this set, 
there can not be a trace in $\traces(\STS^{c(m)})$ that is a witness to killing the mutant.
\end{proof}

\end{document}